\renewclass{\DTIME}{Dtime}
\renewclass{\DSPACE}{Dspace}
\renewclass{\EXP}{Exptime}
\newclass{\TWOEXP}{2Exptime}
\newclass{\ACK}{Ack}
\newclass{\FDTIME}{FDtime}
\let\epsilon\varepsilon
\let\phi\varphi
\tikzset{->,>=stealth',shorten >=1pt,shorten <=1pt,auto,node distance=1.5cm,
every loop/.style={looseness=6},
initial text={},
el/.style={font=\scriptsize},
inner sep=0.5mm,
}
\newtheorem{lemma}{Lemma}
\newtheorem{proposition}{Proposition}
\newtheorem{theorem}{Theorem}
\newtheorem{problem}{Problem}
\newtheorem{definition}{Definition}
\newcommand{\defeq}{:=}
\newcommand{\st}{\mathrel{\mid}}
\newcommand{\pow}[1]{\mathcal{P}(#1)}
\newcommand{\eve}{Eve\xspace}
\newcommand{\adam}{Adam\xspace}
\newcommand{\calM}{\mathcal{M}}
\newcommand{\Obs}{\mathsf{Obs}}
\newcommand{\post}{\mathsf{post}}
\newcommand{\supp}{\mathsf{supp}}
\newcommand{\obs}{\mathsf{obs}}
\newcommand{\el}{\mathsf{EL}}
\newcommand{\posEnergy}{\mathsf{PosEn}}
\newcommand{\safe}{\mathsf{Safe}}
\newcommand{\Ack}{F_\omega}
\newcommand{\plays}{\mathsf{Plays}}
\newcommand{\prefs}{\mathsf{Prefs}}
\begin{document}

\title{The Fixed Initial Credit Problem for Partial-Observation Energy Games is
\ACK-complete}
\author{Guillermo A. P\'erez\thanks{Author supported by an F.R.S.-FNRS fellowship.}}
\affil{D\'{e}partament d'Informatique, Universit\'{e} Libre de Bruxelles\\
\texttt{gperezme@ulb.ac.be}}

\maketitle

\begin{abstract}
	In this paper we study two-player games with asymmetric
	partial observation and an energy objective. Such games are played on a
	weighted automaton by \eve, choosing actions, and \adam, choosing a
	transition labelled with the given action. \eve attempts to maintain the
	sum of the weights (of the transitions taken) non-negative while \adam
	tries to do the opposite. \eve does not know the exact state of the
	game, she is only given an equivalence class of states which contains
	it. In contrast, \adam has full observation.  We show the fixed initial
	credit problem for these games is \ACK-complete.
\end{abstract}

\section{Introduction}
Energy games are two-player quantitative games of infinite horizon played on
finite weighted automata. The game is played in rounds in which one player, \eve,
chooses letters from the automaton's alphabet whilst \adam, the second player,
resolves non-determinism. The initial configuration of the game is determined by
an automaton, an initial state, and an \emph{initial credit} for \eve. The goal
of \eve in an energy game is to keep a certain resource from being depleted. More
specifically, she wins if, for every round, the sum of the weights of the
transitions traversed so far plus her initial credit is non-negative. \adam has
the opposite objective: witnessing a negative value.

Quantitative games, in general, are useful models for the interaction of the
controller and its environment in open reactive systems. Furthermore,
synthesizing controllers in this setting reduces to computing a winning strategy
for one of the players in the corresponding game. Energy games in particular are
useful for systems in which one is interested in the use of bounded resources
such as power or fuel~\cite{cd-ahs03,bflms08}.

Two decision problems for energy games have been studied by the formal
verification community: the \emph{fixed initial credit} and \emph{unknown
initial credit problems}. The former asks whether, given a fixed initial credit
for \eve, she has a strategy which ensures all plays consistent with it are
winning. The latter is more ambitious in that it asks whether there exists some
initial credit for which the same question has a positive answer. It is known
that if \eve has a winning strategy in an energy game, she also has a memoryless
winning strategy. Furthermore, in order to win, \eve essentially has to ensure
staying in cycles with non-negative (total) weight. Using these two facts, one
can show the unknown initial credit reduces in polynomial time to the fixed
initial problem. (If there is some initial credit for which \eve wins, then 
$n w_{\max}$ should suffice---where $n$ is the number of states and $w_{\max}$
denotes the maximum absolute value of a transition weight in the automaton.) It
is also known that the fixed initial credit problem is log-space equivalent to
the threshold problem for \emph{mean-payoff games}~\cite{bflms08}. In a
mean-payoff game, the objective of \eve consists in maximizing the limit
(inferior) of the averages of the running sum of transition weights observed
along an infinite play. Determining if \eve can ensure at least a given
mean-payoff value is not known to be decidable in polynomial time. However, as
they are known to be positionally determined~\cite{zp96}, the latter problem can
be decided in \NP\ and in \coNP. This places energy games in a very special
class of problems which are known to be solvable in $\NP \cap \coNP$ and for
which no polynomial-time algorithm has been
discovered.\footnote{Other games known to be in the same class are
	discounted-sum and simple stochastic games~\cite{zp96}. In fact, all
	games mentioned in this work have further been shown to be in $\UP \cap
	\coUP$~\cite{jurdzinski98} and are therefore unlikely to be
	\NP-complete.}

Multi-dimensional energy games have been studied in, amongst other
works,~\cite{cdhr10} and more recently in~\cite{jls15}. These are the
natural generalization of energy games played on singly-weighted automata to
automata with vector weights on their transitions. They are relevant to the
synthesis of reactive controllers sensitive to the usage of multiple resources.
In this setting, it has been shown that the fixed initial credit problem is
\TWOEXP-complete while the unknown initial credit problem is simpler, namely,
\coNP-complete.

In the present work we focus on another generalization, namely, \emph{energy
games with partial observation}. Such games are, once more, played on finite
singly-weighted automata. The difference between classical energy games
and partial-observation energy games is that, in the latter, a partition of the
states of the automaton into \emph{observations} is also given as part of the
input. The game is then modified so that, after every round, \eve is only
informed of the observation of the successor state chosen
by \adam (hence the partial observability). Energy games with
partial observation where initially studied in~\cite{ddgrt10} and~\cite{HPR14}.
They capture the fact that controllers in open reactive systems have limited
capabilities, e.g. a finite number of sensors with limited precision. Two
results from~\cite{ddgrt10} are of particular interest to us.  First, the
unknown initial credit problem for partial-observation energy games was shown to
be undecidable by reduction from the halting problem for Minsky machines.
Second, decidability of the fixed initial credit problem was established by
describing a reduction to finite safety games.

\paragraph*{Contributions}
In this work we refine the upper and lower bounds for the fixed initial credit
problem for partial-observation energy games. Specifically, we show the
size of the safety game used in the algorithm from~\cite{ddgrt10} is at most
Ackermannian with respect to the size of the input game. We then describe how
the Minsky machine simulation used to show undecidability of the unknown initial
credit problem in~\cite{ddgrt10} can be modified to show \ACK-hardness of the
fixed initial credit problem. This establishes \ACK-completeness of the problem.


\section{Preliminaries}\label{sec:preliminaries}
\paragraph*{Games}
A \emph{weighted game with partial observation} (or just game, for short) is a
tuple $( Q,q_0,\Sigma,\Delta,w,\Obs )$ where $Q$ is a non-empty finite set of
states, $q_0 \in Q$ is the initial state, $\Sigma$ is a finite alphabet of
actions or symbols, $\Delta \subseteq Q \times \Sigma \times Q$ is a total
transition relation, $w : \Delta \to \mathbb{Z}$ is a weight function, and $\Obs
\subseteq \pow{Q}$ is a partition of $Q$ into \emph{observations}. If $\Obs =
\{Q\}$ we say the game is \emph{blind}, if $\Obs = \{\{q\} \st q \in Q\}$ we say
it is a \emph{full-observation} game. For $s \subseteq Q$ and $\sigma \in
\Sigma$, denote by
$\post_\sigma(s) \defeq \{ q' \in Q \st \exists q \in s \land (q,\sigma,q') \in
\Delta\}$ the set of $\sigma$-successors of $s$. Also, let $w_{\max}$ denote the
maximum absolute value of a transition weight in the automaton.

Unless otherwise stated, in what follows we consider a fixed $G = ( Q,
q_0,\Sigma,\Delta,w,\Obs)$.

\paragraph*{Plays}
A \emph{play} in $G$ is an infinite sequence $\pi = q_0 \sigma_0 q_1 \dots$ such
that $(q_i,\sigma_i,q_{i+1}) \in \Delta$, for all $i \ge 0$. For a play $\pi$ and
integers $0 \le i \le j$, we denote by $\pi[i..j]$ the infix $q_i a_i \dots
a_{j-1} q_j$ of
$\pi$. The set of plays in $G$ is denoted by $\plays(G)$ and the set of
prefixes of plays ending in a state is written $\prefs(G)$. The unique
observation containing state $q$ is denoted by $\obs(q)$. We extend
$\obs(\cdot)$ to plays and prefixes in the natural way. For instance, we obtain
the \emph{observation sequence} $\obs(\pi)$ of a play $\pi$ as follows: $\obs(q_0)
\sigma_0 \obs(q_1) \sigma_1 \dots$

\paragraph*{Objectives}
An objective in a game corresponds to a set of ``good'' plays.

The \emph{energy level} of a play prefix $\pi = q_0 \sigma_0 \dots q_n$ is
$\el(\pi) = \sum^{n-1}_{i=0}w(q_i,\sigma_i,q_{i+1})$. The \emph{energy
objective} is parameterized by an initial credit $c_0 \in \mathbb{N}$ and is
defined as:
\[
	\posEnergy(c_0) \defeq \{ \pi \in \plays(G) \st \forall i > 0 : c_0 +
	\el(\pi[0..i]) \ge 0\}.
\]
In other words, the energy objective asks for the energy level of a play never to
drop below $0$ when starting with energy level $c_0$.

We will also make use of the \emph{safety} objective, defined relative to a set
of \emph{safe states} $W \subseteq Q$:
\[
	\safe(W) \defeq \{ q_0 \sigma_0 q_1 \dots \in \plays(G) \st \forall i \ge 0 :
	q_i \in W \}.
\]
Intuitively, the objective asks that \eve make sure the play never visits states
outside of $W$. Note that in safety games the weight function is not needed.

\paragraph*{Strategies}
A \emph{strategy for \eve} is a function $\lambda: \prefs(G) \to \Sigma$. A
strategy $\lambda$ for \eve is \emph{observation-based} if for all prefixes
$\pi,\pi' \in \prefs(G)$, if $\obs(\pi) = \obs(\pi')$ then $\lambda(\pi) =
\lambda(\pi')$. A prefix (or play) $\pi = q_0 \sigma_0 \dots$ is consistent with
a strategy $\lambda$ for \eve if $\lambda(\pi[0..i]) = \sigma_i$, for all $i \ge 0$.
We say a strategy $\lambda$ for \eve is \emph{a winning strategy for her} in a
game with objective $\Omega$ if all plays consistent with $\lambda$ are in
$\Omega$.

We do not formalize the notion of strategy for \adam here. Intuitively, given a
play prefix and an action $\sigma \in \Sigma$, he selects a $\sigma$-successor
$q$ of the current state and reveals $\obs(q)$ to \eve.

\begin{problem}[Fixed initial credit problem]\label{prob:fixed}
	Given a game $G$ and an initial credit $c_0$, decide whether
	there exists a winning observation-based strategy for \eve for the
	objective $\posEnergy(c_0)$.
\end{problem}

\section{Upper bound}\label{sec:upper}
The fixed initial credit problem was shown to be decidable in~\cite{ddgrt10}.
To do so, the problem is reduced to determining the winner of a
safety game played on a finite tree whose nodes are functions which
encode the \emph{belief} of \eve in the original game. The notion of belief
corresponds to the information \eve has about the current state (at any round) of
a partial-observation game. In this particular case, the belief of \eve is
defined by a prefix $\pi = q_0 \sigma_0 \dots \sigma_{n-1} q_n$ with $o =
\obs(q_n)$. It corresponds to the subset of states $s \subseteq o$ which are
reachable from $q_0$ via a prefix $\pi'$ with the same observation sequence as
$\pi$, i.e.  $\obs(\pi) = \obs(\pi')$, together with the energy levels of all
prefixes ending in $q$ for all $q \in s$. Note that \eve only really cares about
the minimal energy levels of prefixes ending in states from $s$. This
information can thus be encoded into functions.

In this section we will first formalize the construction described above. We
will then give an alternative argument (to the one presented in~\cite{ddgrt10})
which proves that the constructed tree is finite. The latter goes via a
translation from functions to vectors and prepares the reader for the next
result.  Finally, we will define an Ackermannian function and show that the size
of the tree is at most the value of the function on the size of the input game.

Throughout the section we consider a fixed partial-observation energy game $G =
(Q,q_0,\Sigma,\Delta,w,\Obs)$ and fixed initial credit $c_0 \in \mathbb{N}$.

\subsection{Reduction to safety game}

\paragraph*{Belief functions}
We define the set of \emph{belief
functions} of \eve as $\mathcal{F} \defeq \{f : Q \to \mathbb{Z} \cup \{\bot\}\}$.
The \emph{support} of a function $f \in \mathcal{F}$ is the set $\{q \in Q \st
f(q) \neq \bot\}$. A function $f \in \mathcal{F}$ is said to be negative if
$f(q) < 0$ for some $q \in \supp(f)$. The initial belief function $f_0$ has
support $\{q_0\}$ and $f_0(q_0) = c_0$. Given two functions $f,g \in
\mathcal{F}$ we define the order $f \preceq g$ to hold if $\supp(f) = \supp(g)$
and $f(q) \leq g(q)$ for all $q \in \supp(f)$. Additionally, for $\sigma \in
\Sigma$ we say $g$ is a \emph{$\sigma$-successor} of $f$ if $\exists o \in \Obs
: \supp(g) = \post_\sigma(\supp(f)) \cap o$ and $g(q) = \min\{f(p) +
w(p,\sigma,q) \st p \in \supp(f) \land (p,\sigma,q) \in \Delta\}$ for all $q \in
\supp(g)$. Intuitively, if \eve has belief function $f$ and she plays $\sigma$,
then if \adam reveals observation $o$ to her as the observation of the new state
of the game, she now has belief function $g$.

\paragraph*{Function-action sequences}
For a function-action sequence $s = f_0 \sigma_0 f_1 \dots \sigma_{n-1} f_n$ we
will write $f_s$ to denote $f_n$, i.e. the last function of the sequence. Let
$S$ be the smallest subset of $(\mathcal{F} \cdot \Sigma)^* \mathcal{F}$
containing $f_0$ and $s \cdot \sigma \cdot f$ if $s \in S$, $f$ is a
$\sigma$-successor of $f_s$, and it holds that:
\begin{inparaenum}[$(a)$]
	\item $f_s$ is not negative and
	\item $f_{s'} \not\preceq f_s$ for all proper prefixes $s'$ of $s$.
\end{inparaenum}
The desired full-observation safety game is then $H =
(S,f_0,\Sigma,E,W)$ where
\begin{itemize}[nolistsep]
	\item the transition relation $E \subseteq S \times \Sigma \times S$
		contains triples $(s,\sigma,s')$ where $s' = s \cdot
		\sigma \cdot f_{s'}$, and
	\item the safe states are $W = \{s \in S \st f_s \text{ is
		not negative}\}$.
\end{itemize}
In order for $E$ to be total, we add self-loops $(s,\sigma,s)$ for any $s \in S$
without outgoing transitions.

\begin{lemma}[From~\cite{ddgrt10}]\label{lem:safety-energy}
	There is a winning observation-based strategy for \eve for the objective
	$\posEnergy(c_0)$ in $G$ if and only if there is a winning strategy for
	\eve in the safety game $H$.
\end{lemma}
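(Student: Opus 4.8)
The plan is to establish both directions of the equivalence by setting up a correspondence between plays in the original partial-observation game $G$ and plays (i.e. paths through $S$) in the safety game $H$. The central observation is that a node $s \in S$ records exactly the belief of \eve: its last function $f_s$ encodes the set of states \adam could currently occupy, together with the minimal energy level at which each is reachable via a prefix with the observation sequence corresponding to $s$. Since $H$ is a full-observation game and the whole belief is a single state of $H$, an observation-based strategy for \eve in $G$ can be read as an ordinary strategy in $H$ and conversely. So the proof reduces to checking that a play respects $\posEnergy(c_0)$ in $G$ if and only if the corresponding play in $H$ stays within the safe set $W$.

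First I would prove the correctness of the belief update. I would argue by induction on the length of a function-action sequence $s = f_0 \sigma_0 \dots \sigma_{n-1} f_n$ that $\supp(f_n)$ is precisely the set of states \adam can reach from $q_0$ by a prefix $\pi$ with observation sequence matching $s$, and that for each such state $q$, the value $f_n(q)$ equals $c_0$ plus the minimum energy level $\el(\pi)$ over all such prefixes $\pi$ ending in $q$. The inductive step follows directly from the definition of $\sigma$-successor, since $g(q) = \min\{f(p) + w(p,\sigma,q)\}$ is exactly the Bellman-style update for the minimal energy reaching $q$. This key invariant links the abstract functions to concrete plays.

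Next I would translate the winning conditions. For the forward direction, suppose \eve has a winning observation-based strategy $\lambda$ in $G$. Because $\lambda$ depends only on observation sequences, it induces a well-defined strategy $\lambda'$ on the nodes of $H$ (each node $s$ carries its observation sequence). I would show that every play consistent with $\lambda'$ stays in $W$: if some node $s$ reachable under $\lambda'$ had $f_s$ negative, then by the invariant there is a prefix $\pi$ consistent with $\lambda$ whose energy level drops below $0$, contradicting that $\lambda$ is winning. For the converse, a winning strategy in $H$ never enters a negative node, so by the invariant every consistent play in $G$ keeps its energy non-negative, yielding a winning observation-based strategy for \eve. The one subtlety to handle carefully is the pruning condition $(b)$, namely that $s'$ is added only when $f_{s'} \not\preceq f_s$ for all proper prefixes $s'$: this ensures finiteness but one must check it does not discard any genuinely ``better'' continuation for \adam, since dominated beliefs (where \eve has at least as much energy at every state) can only help \eve and hence are safe to cut. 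Together with the self-loops added to make $E$ total—which trap a play harmlessly once no progress is possible—this establishes that the truncated tree $H$ faithfully captures the game.

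I expect the main obstacle to be formalizing the treatment of the dominance-based pruning in condition $(b)$, specifically arguing that collapsing a branch at a node dominating an earlier belief does not change the winner. The intuition is that $f_{s'} \preceq f_s$ means the belief at $s$ is at least as favourable to \eve (same support, pointwise higher energy), so any winning continuation from $s'$ can be replayed from $s$; hence cutting off the repeated belief loses nothing for either player. Making this precise—that the monotone structure of energy updates preserves the existence of winning strategies under $\preceq$—is where the real content of the finiteness-compatible correctness argument lies, and it is the step I would develop in most detail.
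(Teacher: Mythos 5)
The paper itself offers no proof of this lemma: it is imported verbatim from~\cite{ddgrt10}, so there is no in-paper argument to compare against. Your proposal reconstructs what is essentially the argument of that reference, and its architecture is sound: the Bellman-style invariant that $f_s$ records, for each $q \in \supp(f_s)$, the value $c_0$ plus the minimal energy level over prefixes of $G$ ending in $q$ whose observation-action sequence matches $s$; the transfer of a winning observation-based strategy of \eve in $G$ to a strategy in $H$ (a reachable negative node would, via the invariant, exhibit a consistent prefix with energy below $0$); and the soundness of the subsumption cut in condition $(b)$, which you correctly single out as the crux.

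Two concrete pieces are needed to finish the crux, and since your text only gestures at them it is worth naming them. First, a monotonicity lemma: if $f \preceq g$ then, for every action $\sigma$ and observation $o$, the $\sigma$-successor of $f$ determined by $o$ is $\preceq$ the $\sigma$-successor of $g$ determined by $o$; this is immediate from the $\min$-plus form of the update. Second, the converse direction must manufacture an \emph{infinite} strategy in $G$ out of a strategy on the \emph{finite} tree $H$: \eve plays with memory a node of $H$, and upon reaching a leaf $s$ subsumed by an ancestor $s'$ (i.e. $f_{s'} \preceq f_s$) she resets her memory to $s'$ and continues playing as from $s'$. The monotonicity lemma then sustains the invariant that \eve's true belief in $G$ always dominates, w.r.t.\ $\preceq$, the belief at her current memory node; since the $H$-strategy is winning, all these node beliefs are non-negative, hence every consistent prefix keeps energy at least $0$. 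Your phrase that ``any winning continuation from $s'$ can be replayed from $s$'' is precisely this reset argument, so the proposal is correct as a plan; a complete write-up must make the reset strategy and the domination invariant explicit, as stated here.
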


\subsection{Showing the safety game is finite}
Henceforth, let $H = (S,f_0,\Sigma,E,W)$ be the safety game constructed from the
partial-observation game $G$ and initial credit $c_0$ we have fixed for all of
Section~\ref{sec:upper}.

In the sequel, it will be useful to consider vectors instead of functions. We
will therefore define an encoding of belief functions into vectors. Formally,
let us fix two bijective mappings $\alpha : \{1,\dots,|Q|\} \to Q$ and $\beta :
\pow{Q} \to \{1,\dots,2^{|Q|}\}$. (The latter two mappings essentially
corresponding to fixing an ordering on $Q$ and the set of subsets of $Q$.) For a
belief function $f$, we will define a vector $\vec{f} \in \mathbb{Z}^{|Q|+2}$
which holds in its $i$-th dimension the value assigned by $f$ to state
$\alpha(i)$ in its support. For technical reasons (see
Lemma~\ref{lem:control}), if state $\alpha(i)$ is not part of the support
of $f$, we will use as place-holder the minimal value assigned by $f$ to any
state. Additionally, we use two dimensions to identify uniquely the support set
of $f$. More formally, the vector $\vec{f}$ is
\[
	\left(
	2^{|Q|} - \beta\left(\supp(f)\right),
	\beta\left(\supp(f)\right),
	\gamma \circ \alpha(|Q|),
	\dots,
	\gamma \circ \alpha(1)
	\right)
\]
where $\gamma(q)$ is $f(q)$ if $q \in \supp(f)$ and $\min\{f(q) \st q \in
\supp(f)\}$ otherwise.

It follows directly from the above definitions that two belief functions being
$\preceq$-comparable is sufficient and necessary for their corresponding vectors
to be $\le$-comparable.\footnote{To be precise, $\le$ here denotes the product
ordering on vectors of integers.} More formally,
\begin{lemma}\label{lem:fun-eq-vec}
	For belief functions $f,g \in \mathcal{F}$ we have that $f \preceq g$ if
	and only if $\vec{f} \le \vec{g}$.
\end{lemma}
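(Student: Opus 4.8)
The plan is to prove the biconditional $f \preceq g \iff \vec{f} \le \vec{g}$ by unwinding the definition of $\vec{\cdot}$ and carefully analyzing its two structural parts: the first two coordinates, which encode the support set, and the remaining $|Q|$ coordinates, which encode the (padded) function values.

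First I would observe that $f \preceq g$ requires, by definition, that $\supp(f) = \supp(g)$. So the natural strategy is to show that the first two coordinates of $\vec{f}$ and $\vec{g}$ are responsible for forcing equal supports whenever the vectors are comparable, and that once supports agree, the remaining coordinates faithfully track the pointwise order on the support. For the forward direction, I would assume $f \preceq g$. Then $\supp(f) = \supp(g)$, so both functions yield identical first two coordinates $\bigl(2^{|Q|} - \beta(\supp(f)), \beta(\supp(f))\bigr)$, and these contribute equalities to the product order. For the value coordinates, I consider each state $\alpha(i)$: if $\alpha(i) \in \supp(f) = \supp(g)$ then the coordinate is $f(\alpha(i)) \le g(\alpha(i))$ by the definition of $\preceq$; if $\alpha(i) \notin \supp(f)$, the padding value is $\min\{f(q) \st q \in \supp(f)\} \le \min\{g(q) \st q \in \supp(g)\}$, where the inequality again follows from $f(q) \le g(q)$ on the common support. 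Hence every coordinate of $\vec{f}$ is at most the corresponding coordinate of $\vec{g}$, giving $\vec{f} \le \vec{g}$.

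For the converse, I would assume $\vec{f} \le \vec{g}$ and first argue $\supp(f) = \supp(g)$. This is the one step requiring a small trick, and it is precisely why the two support-encoding coordinates are designed the way they are. The first coordinate gives $2^{|Q|} - \beta(\supp(f)) \le 2^{|Q|} - \beta(\supp(g))$, i.e. $\beta(\supp(g)) \le \beta(\supp(f))$, while the second gives $\beta(\supp(f)) \le \beta(\supp(g))$. Together these force $\beta(\supp(f)) = \beta(\supp(g))$, and since $\beta$ is a bijection, $\supp(f) = \supp(g)$. Once supports coincide, the value coordinates for states in the common support directly yield $f(q) \le g(q)$ for all $q \in \supp(f)$, which together with equal supports is exactly $f \preceq g$.

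The main obstacle, such as it is, is the converse support argument: naively one might worry that $\le$-comparability of two vectors says nothing about their supports, and indeed without the clever pair of opposed coordinates ($2^{|Q|} - \beta(\supp(f))$ against $\beta(\supp(f))$) the implication would fail. The antisymmetric encoding is exactly what converts a one-sided product inequality into an equality of supports, and recognizing this is the crux. The rest is a routine coordinate-by-coordinate verification, and I would keep it brief.
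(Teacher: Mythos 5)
Your proof is correct and is exactly the argument the paper intends: the paper states the lemma as following ``directly from the above definitions,'' and your coordinate-by-coordinate verification---including the key observation that the opposed pair $2^{|Q|} - \beta(\supp(f))$ and $\beta(\supp(f))$ forces equality of supports from a one-sided product inequality---is precisely the definitional unwinding being alluded to. Nothing is missing; your treatment of the padding coordinates via the monotonicity of the minimum is also right.
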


Using the above Lemma we can already argue that the safety game $H$ is finite.
Suppose, towards a contradiction, that $H$ is infinite. By K\"onig's Lemma,
there is an infinite function-action sequence $s = f_0 \sigma_0 f_1 \dots$ such
that for all $i \ge 0$ we have:
\begin{inparaenum}[$(i)$]
	\item $f_{i+1}$ is a $\sigma_i$-successor of $f_i$,
	\item $f_i$ is not negative, and
	\item $f_j \not\preceq f_i$, for all $0 \le j < i$.
\end{inparaenum}
Now, let us consider the vector sequence $v = \vec{f_0} \vec{f_1}
\dots$. Note that the two dimensions used to represent the support of the
function cannot be negative. Hence, together with condition $(ii)$ above,
it follows that $\vec{f_i} \in \mathbb{N}^{|Q|+2}$ for all $i \ge 0$. That is,
the vectors have no negative integers. Further, from $(iii)$ together with
Lemma~\ref{lem:fun-eq-vec} it follows that there are no distinct vectors
$\vec{f_i}, \vec{f_j}$ in the sequence $v$ such that $\vec{f_i} \le \vec{f_j}$.
Thus, we get a contradiction with Dickson's Lemma---which states that
every infinite sequence of vectors of natural numbers has two distinct
$\le$-comparable elements.
\begin{proposition}[From~\cite{ddgrt10}]
	The game $H$ has finite state space.
\end{proposition}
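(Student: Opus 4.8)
The plan is to exhibit $H$ as a finitely branching tree and then rule out infinite branches using the vector encoding together with a well-quasi-ordering argument. First I would observe that, by construction, every transition $(s,\sigma,s') \in E$ other than the added self-loops satisfies $s' = s \cdot \sigma \cdot f_{s'}$, so $s'$ is a strict extension of $s$ by exactly two symbols and $s$ is its unique immediate predecessor. Hence, ignoring the self-loops---which introduce no new states---the reachable part of $H$ is a tree rooted at $f_0$ whose nodes are precisely the sequences in $S$. This tree is finitely branching: the children of a node $s$ are the sequences $s \cdot \sigma \cdot g$ with $\sigma \in \Sigma$ and $g$ a $\sigma$-successor of $f_s$, and for each pair $(\sigma,o) \in \Sigma \times \Obs$ there is at most one such $g$, since its support is forced to equal $\post_\sigma(\supp(f_s)) \cap o$ and its values are then determined by the $\min$ formula. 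Thus each node has at most $|\Sigma| \cdot |\Obs|$ children.

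Next I would argue by contradiction: if $S$ were infinite, the tree above would be an infinite finitely branching tree, so K\"onig's Lemma would supply an infinite branch $f_0 \sigma_0 f_1 \sigma_1 \dots$. Unwinding the inductive definition of $S$ along this branch, every finite prefix $f_0 \dots f_i$ lies in $S$, and applying the membership condition to the one-step-longer prefix $f_0 \dots f_{i+1}$ yields exactly the three properties of the branch: $(i)$ each $f_{i+1}$ is a $\sigma_i$-successor of $f_i$; $(ii)$ each $f_i$ is not negative (condition $(a)$); and $(iii)$ $f_j \not\preceq f_i$ for all $0 \le j < i$ (condition $(b)$).

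Then I would pass to the vector encoding and invoke Dickson's Lemma. From $(ii)$ I would check that $\vec{f_i} \in \mathbb{N}^{|Q|+2}$ for every $i$: the two support-identifying coordinates $2^{|Q|}-\beta(\supp(f_i))$ and $\beta(\supp(f_i))$ are nonnegative, while each remaining coordinate is a $\gamma$-value, i.e. either $f_i(q) \ge 0$ for a support state or the (likewise nonnegative) minimum support value used as a place-holder. From $(iii)$, together with Lemma~\ref{lem:fun-eq-vec}, I would conclude $\vec{f_j} \not\le \vec{f_i}$ for all $j < i$, so $\vec{f_0}\,\vec{f_1}\dots$ is an infinite sequence of nonnegative integer vectors with no $\le$-ascending pair. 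This contradicts Dickson's Lemma, establishing that $S$ is finite.

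I expect the bookkeeping step---verifying that an infinite branch produced by K\"onig's Lemma genuinely satisfies the incomparability property $(iii)$---to be the main point requiring care, since it rests on reading condition $(b)$ of the definition of $S$ correctly: the no-domination constraint is imposed on the \emph{last} function of each prefix, and it is the existence of the next function along the branch that forces $f_j \not\preceq f_i$. Everything else---finite branching and the membership of the encoded vectors in $\mathbb{N}^{|Q|+2}$---is a routine check, and the well-quasi-ordering of $(\mathbb{N}^{|Q|+2},\le)$ supplied by Dickson's Lemma does the real work.
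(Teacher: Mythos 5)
Your proof is correct and takes essentially the same route as the paper's: König's Lemma to extract an infinite function-action sequence satisfying properties $(i)$--$(iii)$, the vector encoding with Lemma~\ref{lem:fun-eq-vec}, and a contradiction with Dickson's Lemma. The only difference is that you spell out the finite-branching check (at most $|\Sigma| \cdot |\Obs|$ successors per node) that the paper leaves implicit when invoking König's Lemma---a worthwhile detail, but not a different argument.
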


\subsection{An Ackermann bound on the size of the safety game}
The argument we have presented, to show the safety game is finite, carries the
intuition that any function-action sequence from $H$ should eventually end in a
negative function or a function which is bigger than another
function in the sequence (w.r.t. $\preceq$).
How long can such sequences be? Using the relation
between functions and vectors that we have established (and formalized in
Lemma~\ref{lem:fun-eq-vec}) we can apply results of Schmitz et
al.~\cite{ffss11,ss12,schmitz16} which have been formulated for sequences of
vectors of natural numbers. Intuitively, the bound they provide is based on
``how big the jump is'' from each vector in the sequence to the next. This last
notion is formalized in the following definition.
\begin{definition}[Controlled vector sequence]
	A vector sequence $\mathbf{a_0} \mathbf{a_1} \dots \in
	\left(\mathbb{N}^d\right)^*$ is $t$-controlled by a unary increasing
	function $\kappa : \mathbb{N} \to \mathbb{N}$ if $|\mathbf{a_i}|_\infty
	< \kappa(t + i)$ for all $i \ge 0$.\footnote{For a vector $\mathbf{a} =
	(a_d,a_{d-1},\dots,a_1)$, the infinity norm is the maximum value on any
	dimension, i.e.  $\max\{a_i \st 1 \le i \le d\}$.}
\end{definition}

We will now define a hierarchy of sets of functions. Our intention is to
determine at which level of this hierarchy we can find a function which can be
said to control vector sequences induced by function-action sequences from $H$.
This will allow us to find a function---also at a specific level of the
hierarchy---that bounds the length of such sequences (see
Lemma~\ref{lem:length-function}).

\paragraph*{The fast-growing functions}
These can be seen as a
sequence $(F_i)_{i \ge 0}$ of number-theoretic functions defined inductively
below.~\cite{fw98} 
\begin{align*}
	F_0(x) &\defeq x + 1\\
	F_{i+1}(x) &\defeq F_i^{x +1}(x) = \overbrace{F_i(F_i(\dots
	F_i}^{x+1\text{ times}}(x)\dots))
\end{align*}

The following Lemma summarizes some
properties of the hierarchy.
\begin{lemma}[From~\cite{fw98}]\label{lem:props-fast-growing-functions}
	For all $i \in \mathbb{N}$,
	\begin{itemize}[nolistsep]
		\item for all $i \le j$ and $0 \le n \le m$, $F_j(m) \ge F_i(n)$
			and the latter is strict if the inequality between $j$
			and $i$ or $m$ and $n$ is strict;
		\item $F_i$ is primitive-recursive;
		\item $F_i$ is dominated by $F_{i+1}$.\footnote{For two
				functions $f,g : \mathbb{N} \to \mathbb{N}$, we
				say $g$ \emph{dominates} $f$ if $g(x) \ge f(x)$
				for all but finitely many $x \in \mathbb{N}$.}
	\end{itemize}
	Furthermore, for all primitive-recursive functions $f$, there exists $i
	\in \mathbb{N}$ such that $F_i$ dominates $f$.
\end{lemma}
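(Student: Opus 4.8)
The plan is to derive every claim from a small set of elementary monotonicity facts, all proved by induction on $i$ from the defining identity $F_{i+1}(x) = F_i^{x+1}(x)$. First I would record: (1) \emph{expansivity}, $F_i(x) > x$ for all $x$; (2) strict monotonicity in the argument, $F_i(x) < F_i(x+1)$; and (3) monotonicity in the index, $F_i(x) \le F_{i+1}(x)$ for all $x$, strict once $x \ge 1$. The base case $F_0(x) = x+1$ is trivial; in the step, expansivity of $F_i$ means each extra iterate strictly raises the value (so $F_{i+1}(x) = F_i^{x+1}(x) \ge F_i^2(x) > F_i(x)$ when $x \ge 1$), while strict monotonicity propagates through a fixed number of iterates. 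Note $F_i(0) = 1$ for every $i$, which is exactly why the index comparison is only strict from $x \ge 1$ onward. The first bullet is then a chaining argument: for $i \le j$ and $n \le m$, $F_j(m) \ge F_i(m) \ge F_i(n)$, and this is strict whenever $m > n$ (by (2)), or whenever $j > i$ and $m \ge 1$ (by (3)). The third bullet is immediate, since (3) gives $F_{i+1}(x) \ge F_i(x)$ everywhere and domination only asks for this at all but finitely many $x$.

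For the second bullet I would show by induction on $i$ that $F_i$ is primitive-recursive. The base $F_0$ is the successor function. For the step, given that $F_i$ is primitive-recursive, the iterator $g$ defined by $g(0,x) = x$ and $g(n+1,x) = F_i(g(n,x))$ is primitive-recursive because the primitive-recursive functions are closed under the primitive-recursion scheme; hence $F_{i+1}(x) = g(x+1,x)$ is primitive-recursive as a composition of such functions.

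The real content is the final claim, that the hierarchy is cofinal among the primitive-recursive functions. I would prove a composition-friendly strengthening by structural induction on a derivation of a primitive-recursive function $f$ of arity $k$: there are $i,c \in \mathbb{N}$ with $f(x_1,\dots,x_k) \le F_i(\max_j x_j + c)$ for all inputs. The base cases (zero, successor, projections) are handled with $F_0$ and $F_1$, and the additive constant $c$ is absorbed at the end via (1) to produce an index $i'$ with $F_{i'}$ dominating $f$. The composition and primitive-recursion cases are the main obstacle: one must fold several level-$i$ bounds into a single level-$(i+1)$ (or level-$(i+2)$) bound, which needs auxiliary estimates such as $F_i^{\,k}(x) \le F_{i+1}(x+k)$ read off from the defining identity together with (1)–(3). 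The delicate point is the primitive-recursion case, whose iteration mirrors precisely the definition of $F_{i+1}$: making this step raise the index by only a bounded amount, uniformly in the recursion parameter, is where the bookkeeping must be arranged with care.
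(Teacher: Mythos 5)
The paper never proves this lemma: it is imported wholesale from the cited reference \cite{fw98}, so there is no in-paper argument to compare against, and your proposal is necessarily a different route---an actual proof rather than a citation. Your scaffolding is the standard and correct one: the three facts (expansivity $F_i(x) > x$, strict monotonicity in the argument, monotonicity in the index) all go through by the inductions you describe, the iterator argument for primitive-recursiveness is sound, and the domination bullet follows immediately. You also caught a genuine defect in the lemma's literal statement: since $F_{i+1}(0) = F_i^{1}(0) = F_i(0) = 1$ for every $i$, the claimed strictness when $j > i$ fails at $m = n = 0$; your corrected version (index-strictness only for $m \ge 1$) is the true statement, and it is all the paper ever needs. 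The one place where your text is a plan rather than a proof is the final cofinality claim, which is the only deep part of the lemma: the composition and primitive-recursion cases of the structural induction are described but not executed. Your chosen strengthening $f(x_1,\dots,x_k) \le F_i(\max_j x_j + c)$ and the estimate $F_i^{k}(x) \le F_{i+1}(x+k)$ are the right tools; the missing ingredient worth naming is $F_i(x+c) \le F_i^{c+1}(x)$ (from $F_i^{c}(x) \ge x + c$ and monotonicity), which is what lets the additive constant be absorbed inside the iteration so that the primitive-recursion case raises the index by only one. Since the paper itself delegates the whole lemma to the literature, leaving this classical L\"ob--Wainer bookkeeping as a sketch is defensible, but be aware that the ``furthermore'' clause is where all the actual work lives, and your proposal does not yet contain it.
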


We consider the following variant of the Ackermann function $\Ack(x) \defeq
F_x(x)$. It is not hard to show that $\Ack$ dominates all $F_i$---that is, for
all $i \in \mathbb{N}$---and, in turn, all primitive-recursive functions.

\paragraph*{The Grzegorczyk hierarchy}
We now introduce a sequence $(\mathfrak{F}_i)_{i \ge 2}$ of sets of
functions. Using the $i$-th fast-growing function, we define the $i$-th level of
the hierarchy~\cite{wainer70,schmitz16} as follows:
\[ \textstyle
	\mathfrak{F}_i \defeq \bigcup_{c \in \mathbb{N}} \FDTIME\left(
	F^c_i(x)
	\right).
\]
In other words, $\mathfrak{F}_i$ consists of all functions $\mathbb{N} \to
\mathbb{N}$ which can be computed by a deterministic Turing machine in time
bounded by any finite composition of the function $F_i$.  Note that, since
$F_2$ is of exponential growth, we could restrict space instead of time or
even allow non-determinism and obtain exactly the same classes.

The following property of the classes of functions from the hierarchy will be
useful.
\begin{lemma}[From~\cite{lw70,schmitz16}]
	For all $i \ge 2$, every $f \in \mathfrak{F}_i$ is dominated by $F_{j}$
	if $i < j$.
\end{lemma}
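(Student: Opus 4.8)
The plan is to read off a bound on the \emph{value} $f(x)$ from the time available to compute it, and then to absorb every fixed composition of $F_i$ into a single application of $F_{i+1}$. Let $f \in \mathfrak{F}_i$. By definition of the hierarchy there is a constant $c \in \mathbb{N}$ such that $f$ is computed by a deterministic Turing machine running in time at most $F_i^c(x)$ on input $x$. The first step is to bound the output: a machine that halts after $T(x)$ steps can have written at most $T(x)$ symbols, so its output, read as a binary number, has value at most $2^{T(x)}$. Hence $f(x) \le 2^{F_i^c(x)}$ for all $x$.

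The second step is to fold this exponential, together with the fixed iteration count $c$, into the next level of the fast-growing hierarchy. Since $i \ge 2$, the function $F_2$ is of exponential growth, so $2^y \le F_2(y)$ for all large $y$; and by the monotonicity of the hierarchy in its index (first item of Lemma~\ref{lem:props-fast-growing-functions}) we have $F_2 \le F_i$. Applying these with $y = F_i^c(x)$ yields
\[
	f(x) \;\le\; 2^{F_i^c(x)} \;\le\; F_2\!\left(F_i^c(x)\right) \;\le\; F_i\!\left(F_i^c(x)\right) \;=\; F_i^{c+1}(x)
\]
for all sufficiently large $x$. Finally, the defining identity $F_{i+1}(x) = F_i^{x+1}(x)$ together with monotonicity gives $F_i^{c+1}(x) \le F_i^{x+1}(x) = F_{i+1}(x)$ once $x \ge c$. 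Chaining these bounds shows $f(x) \le F_{i+1}(x)$ for all large $x$, i.e.\ $f$ is dominated by $F_{i+1}$; and since $F_{i+1} \le F_j$ for every $j \ge i+1$ (again by the first item of Lemma~\ref{lem:props-fast-growing-functions}), $f$ is dominated by $F_j$ whenever $i < j$.

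The main obstacle is the honest bookkeeping in the first step: one must argue that the \emph{value} produced by the machine, and not merely its length, is controlled by the running time, which is where the extra exponential enters and why the hypothesis $i \ge 2$ is needed in order to reabsorb it via $F_2$. A secondary point to handle with care is the dependence of the ``large enough $x$'' thresholds on the constant $c$; since $c$ is fixed once $f$ is fixed, all of the inequalities above hold for all but finitely many $x$, which is exactly what domination requires.
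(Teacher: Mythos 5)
Your proof is correct. Note that the paper itself offers no proof of this lemma: it is imported verbatim from the literature (L\"ob--Wainer and Schmitz), so there is no internal argument to compare yours against; what you have written is a legitimate self-contained proof of the statement for the machine-based definition of $\mathfrak{F}_i$ that the paper uses, namely $\mathfrak{F}_i = \bigcup_{c} \FDTIME\left(F_i^c(x)\right)$. Your argument correctly isolates the one genuine subtlety, which is that the running time only bounds the \emph{length} of the output, so an exponential must be absorbed before any comparison of values is possible; this is precisely where the hypothesis $i \ge 2$ enters, since $F_2$ already has exponential growth and $F_2 \le F_i$ by monotonicity of the hierarchy in the index. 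The remaining steps --- folding the fixed composition into one increment of the index via $F_i^{c+1}(x) \le F_i^{x+1}(x) = F_{i+1}(x)$ for $x \ge c$ (valid because $F_i(y) \ge y$ makes iterates monotone in the iteration count), and then passing from $F_{i+1}$ to any $F_j$ with $j > i$ --- are the standard ones, and your handling of the finitely-many-exceptions bookkeeping (the thresholds depend only on the fixed constant $c$) matches the paper's definition of domination exactly.
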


We will now show how to control vector sequences induced by function-action
sequences from $H$ (following the Karp-Miller tree
analysis from~\cite{ffss11}).
\begin{lemma}\label{lem:control}
	For all non-negative function-action sequences $s \in S$, the
	corresponding vector sequence from $\left(\mathbb{N}^{|Q|+2}\right)^*$
	is $(c_0 + w_{\max} + |Q|)$-controlled by $k(x) \defeq 2^x + x^2$.
\end{lemma}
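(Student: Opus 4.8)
The plan is to track two quantities along a function-action sequence and bound each by $k(c_0 + w_{\max} + |Q| + i)$ at step $i$. The infinity norm of $\vec{f_i}$ is the maximum of two kinds of coordinates: the two dimensions encoding the support (via $\beta$ and $2^{|Q|}-\beta$), and the $|Q|$ dimensions holding energy values (the $\gamma\circ\alpha$ entries). So first I would bound the support-encoding coordinates: since $\beta$ maps into $\{1,\dots,2^{|Q|}\}$, both $\beta(\supp(f_i))$ and $2^{|Q|}-\beta(\supp(f_i))$ are at most $2^{|Q|}$. This is a constant bound, independent of $i$, and it is comfortably below $2^{c_0+w_{\max}+|Q|} \le 2^{x}$ at any admissible $x = c_0+w_{\max}+|Q|+i$, so these coordinates are never the obstruction.

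The substantive part is bounding the energy coordinates. Here I would argue by induction on $i$. In the initial vector $\vec{f_0}$, the only finite value is $c_0$ (on $q_0$), and by the place-holder convention every other $\gamma$-entry also equals $c_0 = \min$; hence $|\vec{f_0}|_\infty \le \max(2^{|Q|}, c_0)$, which must be checked to be below $k(c_0+w_{\max}+|Q|)$. For the inductive step, recall that if $f_{i+1}$ is a $\sigma_i$-successor of $f_i$, then each value $f_{i+1}(q) = \min\{f_i(p) + w(p,\sigma_i,q)\}$ differs from a value of $f_i$ by at most $w_{\max}$ in absolute value, and the place-holder (the minimum over the new support) is likewise one of these values. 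Combining with condition $(ii)$ (non-negativity) from the definition of $S$, the energy coordinates stay in $\mathbb{N}$, so I only need an upper bound: each coordinate of $\vec{f_{i+1}}$ is at most $\max$-coordinate of $\vec{f_i}$ plus $w_{\max}$. Unwinding the recursion gives the energy coordinates at step $i$ bounded by (roughly) $c_0 + i\cdot w_{\max}$, a linear-in-$i$ bound.

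It remains to verify that both bounds sit under the claimed control function. Writing $t = c_0 + w_{\max} + |Q|$ and $x = t + i$, I must confirm $\max(2^{|Q|},\, c_0 + i\cdot w_{\max}) < 2^{x} + x^2 = k(x)$. For the support coordinates this is immediate since $2^{|Q|} \le 2^{x}$. For the energy coordinates, the estimate $c_0 + i\cdot w_{\max} \le t + i \cdot t \le (i+1)t \le (t+i)^2 = x^2 < k(x)$ does the job, using $t \ge 1$ and a crude product bound; I expect to phrase the linear bound slightly more carefully so that the polynomial term $x^2$ dominates it cleanly.

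The main obstacle, and the reason the control function is chosen as $2^x + x^2$ rather than something simpler, is reconciling the two coordinate types: the support dimensions force an additive $2^{|Q|}$ term (hence the exponential $2^x$ summand, which absorbs $2^{|Q|}$ once $x \ge |Q|$), while the energy dimensions grow only linearly in $i$ but start from $c_0$ and accumulate $w_{\max}$ per step (hence the polynomial $x^2$ summand). The careful bookkeeping is making sure the \emph{offset} $t = c_0 + w_{\max} + |Q|$ in the definition of $t$-controlled is large enough that the induction closes at every step simultaneously for both kinds of coordinates; this is where one must be attentive rather than merely routine, since the place-holder convention for states outside the support is precisely what keeps the energy coordinates from spuriously recording $\bot$ and lets the $w_{\max}$-increment argument apply uniformly.
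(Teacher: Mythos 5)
Your proposal is correct and follows essentially the same route as the paper's proof: bound the two support-encoding coordinates by $2^{|Q|}$, bound the energy coordinates by $c_0 + i \cdot w_{\max}$ (values change by at most $w_{\max}$ per step), and check that the exponential summand $2^x$ absorbs the former while the square summand $x^2$ absorbs the latter at $x = c_0 + w_{\max} + |Q| + i$. The paper compresses this into a single chain of inequalities, but the decomposition and the estimates are the same.
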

\begin{proof}
	Let us assume that $w_{\max} > 0$. (This is no loss of generality as
	the energy game is trivial otherwise.) For any sequence $s = f_0
	\sigma_0 \dots \sigma_{n-1} f_n \in S$ we have that for all $0 \le i \le
	n$:
	\begin{align*}
		|\vec{f_{i}}|_\infty &\le \max\{2^{|Q|}, c_0 + i \cdot
		w_{\max}\}\\
		&\le 2^{|Q|} + (c_0 + w_{\max} + i)^2\\
		&\le 2^{|Q| + c_0 + w_{\max} + i} + (|Q| + c_0 + w_{\max} +
		i)^2
	\end{align*}
	which concludes the proof.
\end{proof}

Note that the control function from Lemma~\ref{lem:control} is at the second
level of the Grzegorczyk hierarchy. That is, $k \in \mathfrak{F}_2$, since
$F_2$ is exponential. We can now apply the following tool.
\begin{lemma}[From~\cite{ffss11}]\label{lem:length-function}
	For natural numbers $d,i \ge 1$, for all unary increasing functions
	$\kappa \in \mathfrak{F}_i$, there exists a function $L_{d,\kappa} :
	\mathbb{N} \to \mathbb{N} \in \mathfrak{F}_{i + d - 1}$ such that
	$L_{d,\kappa}(t)$ is an upper bound for the length of non-increasing
	sequences from $\left(\mathbb{N}^d\right)^*$ that are $t$-controlled by
	$\kappa$.
\end{lemma}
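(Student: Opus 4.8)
The plan is to prove the stronger statement that the maximal length of a bad (i.e.\ non-increasing) sequence over $\mathbb{N}^d$ that is $t$-controlled by $\kappa$ is itself a well-defined function of $t$ lying in $\mathfrak{F}_{i+d-1}$, and to take $L_{d,\kappa}$ as an upper bound on this function. First I would fix an increasing $\kappa \in \mathfrak{F}_i$ and define $L_{d,\kappa}(t)$ to be the supremum, over all bad sequences $\mathbf{a_0}\mathbf{a_1}\cdots$ over $\mathbb{N}^d$ with $|\mathbf{a_j}|_\infty < \kappa(t+j)$ for every $j$, of their length. Dickson's Lemma guarantees each such sequence is finite, and since at position $j$ there are only finitely many admissible vectors (those of norm below $\kappa(t+j)$), König's Lemma shows the supremum is finite; hence $L_{d,\kappa}$ is total. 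I would also record the two monotonicity facts that $L_{d,\kappa}(t)$ is non-decreasing in both $t$ and $d$, which are immediate from the definitions.

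The argument then proceeds by induction on the dimension $d$. For the base case $d=1$, a bad sequence over $\mathbb{N}$ is strictly decreasing, so its first element, which satisfies $|\mathbf{a_0}|_\infty < \kappa(t)$, bounds the whole length: $L_{1,\kappa}(t) \le \kappa(t)$, which lies in $\mathfrak{F}_i = \mathfrak{F}_{i+1-1}$ as required. The heart of the proof is the descent from dimension $d$ to $d-1$. Given a $t$-controlled bad sequence, its first vector $\mathbf{a_0}$ has every coordinate below $\kappa(t)$; by badness no later vector dominates $\mathbf{a_0}$ coordinatewise, so every later vector falls strictly below $\mathbf{a_0}$ on at least one coordinate. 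Following the Karp-Miller / length-function analysis of~\cite{ffss11,schmitz16}, I would stratify the suffix according to which coordinate witnesses this drop and with what value, projecting each stratum onto the remaining $d-1$ coordinates to obtain a bad sequence over $\mathbb{N}^{d-1}$. Because the amplitude of admissible vectors grows as $\kappa(t+j)$ along the sequence, each projected sub-sequence is controlled by $\kappa$ shifted by the length already consumed; collecting the strata yields a recurrence in which $L_{d,\kappa}(t)$ is bounded by iterating $L_{d-1,\kappa}$ on the order of $\kappa(t)$ times, with the control index advanced at each iteration.

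It remains to solve this recurrence within the Grzegorczyk hierarchy. Using the induction hypothesis $L_{d-1,\kappa} \in \mathfrak{F}_{i+d-2}$ together with the closure of each $\mathfrak{F}_m$ under composition, I would argue that a bounded iteration of an $\mathfrak{F}_m$-function, where the number of iterations is itself controlled by an $\mathfrak{F}_m$-function, is captured by a finite composition of $F_{m+1} = F_m^{x+1}$ and therefore lies in $\mathfrak{F}_{m+1}$; applied with $m = i+d-2$ this places $L_{d,\kappa} \in \mathfrak{F}_{i+d-1}$ and closes the induction. The main obstacle I expect is precisely this bookkeeping: establishing the descent recurrence so that (i) the projection onto $d-1$ coordinates genuinely preserves badness, and (ii) the shift of the control function $\kappa$ incurred at each step contributes only a finite composition rather than an extra level of the hierarchy. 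Getting the shifts to telescope into a single application of $F_{i+d-1}$, rather than accumulating into $F_{i+d}$, is the delicate point and is the technical crux of the length-function theorem of Figueira et al.
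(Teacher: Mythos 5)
First, note that the paper does not prove this lemma at all: it is imported verbatim from Figueira, Figueira, Schmitz and Schnoebelen~\cite{ffss11}, so there is no in-paper argument to match your attempt against; what you are attempting is a reconstruction of their length-function theorem. Your groundwork is fine: defining $L_{d,\kappa}(t)$ as the supremum of lengths of $t$-controlled bad sequences, its totality via Dickson's and K\"onig's lemmas, the monotonicity remarks, and the base case $d=1$ (a bad sequence over $\mathbb{N}$ is strictly decreasing, hence of length at most $\kappa(t)$) are all correct.

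The gap is in the induction step, and it sits exactly where the content of the theorem lies. When you stratify the suffix by the pair (coordinate $c$, value $v < \kappa(t)$) witnessing the drop below $\mathbf{a_0}$ and project each stratum onto $\mathbb{N}^{d-1}$, the resulting subsequences are \emph{scattered} subsequences of the original sequence, and a scattered subsequence of a $t$-controlled sequence is not $t'$-controlled for any $t'$ fixed in advance: the $r$-th element of stratum $s$ occupies a global position $j_r$ exceeding $r$ by the number of elements of the \emph{other} strata occurring before it, so the control shift needed for stratum $s$ depends on the lengths of all the other strata---precisely the quantities being bounded. One obtains the circular system $|S_s| \le L_{d-1,\kappa}\bigl(t + \sum_{s' \ne s}|S_{s'}|\bigr)$ rather than a recurrence, and your phrase ``controlled by $\kappa$ shifted by the length already consumed'' glosses over this: since the strata interleave, there is no ordering of them under which the shifts telescope. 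This is not a pedantic point; stratification arguments of this shape (McAloon, Clote) do terminate but historically land \emph{above} the level $\mathfrak{F}_{i+d-1}$, and obtaining the tight level is the contribution of~\cite{ffss11}. Their proof avoids scattered subsequences entirely: it uses the descent equation $L_A(t) \le \max_a \bigl(1 + L_{A/a}(t+1)\bigr)$, which only ever passes to a \emph{contiguous} suffix (hence cleanly $(t+1)$-controlled), reflects the residual $\mathbb{N}^d/a$ into a finite sum of products of finite chains with $\mathbb{N}^{d-1}$, and solves the resulting recurrence inside the hierarchy. Your closing claim that a bounded iteration of an $\mathfrak{F}_m$-function climbs exactly one level is the right kind of lemma, but it is asserted rather than proved, and it is applied to a recurrence you never actually derive. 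By your own admission the proposal stops at the ``technical crux''; everything before that point is correct but constitutes the easy part of the statement.
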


To conclude, we show how to bound the length of any sequence $s \in S$. By
construction of $H$, $s$ is a function-action non-increasing sequence $f_0 \dots
f_n$ of non-negative functions---except for the last function, which might be
negative. The vector sequence $\vec{f_0} \dots \vec{f}_{n-1}$ is therefore
non-increasing and, for all $0 \le i < n$, the vector $\vec{f_i}$ has dimension
$|Q|+2$ and contains only non-negative numbers.  It follows from
Lemmas~\ref{lem:control} and~\ref{lem:length-function} that the length of the
vector sequence is less than $h(c_0 + |Q| + w_{\max})$, where $h = L_{|Q| + 2,k}
\in \mathfrak{F}_{|Q|+3}$. Hence, the length of $s$ is bounded by $h(c_0 + |Q| +
w_{\max}) + 1$.  Let us write $|G| = |Q| + c_0 + w_{\max} + |\Delta| + |\Obs|$.
We thus have that $h(|G|) + 1$ bounds the length of all $s \in S$. Clearly then,
the size of $S$ is at most $|\Delta|^{h(|G|) + 1}$. More coarsely, we have that
$2^{(h(|G|)+1)^2}$ bounds the size of $S$. It follows from
Lemmas~\ref{lem:props-fast-growing-functions}--\ref{lem:length-function} that
the latter bound is primitive recursive for all fixed $G$. Since $\Ack$
dominates all primitive-recursive functions, we conclude $|S|$ is
$\mathcal{O}(\Ack(|G|))$. As safety games are known to be solvable in linear
time with respect to the size of the game graph (see, e.g.,~\cite{ag11}), the
desired result then follows from Lemma~\ref{lem:safety-energy}.

\begin{theorem}\label{thm:upper}
	The fixed initial credit problem is decidable in Ackermannian time.
\end{theorem}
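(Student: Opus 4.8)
The plan is to reduce the problem to solving the finite safety game $H$ and then to pin down an explicit Ackermannian bound on its size. By Lemma~\ref{lem:safety-energy}, \eve has a winning observation-based strategy for $\posEnergy(c_0)$ in $G$ if and only if she wins the safety game $H$; so it suffices to construct $H$ and solve it. Since safety games are solvable in time linear in the size of their game graph, the whole difficulty lies in bounding $|S|$, the number of states of $H$, by a function dominated by $\Ack$.

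First I would bound the length of an arbitrary state $s = f_0 \sigma_0 \dots \sigma_{n-1} f_n \in S$. By the defining conditions of $S$, every proper prefix of $s$ ends in a non-negative function and no earlier function lies $\preceq$-below a later one; hence, transporting everything through the encoding of Lemma~\ref{lem:fun-eq-vec}, the vector sequence $\vec{f_0} \dots \vec{f}_{n-1}$ lies in $\left(\mathbb{N}^{|Q|+2}\right)^*$ and is non-increasing (no earlier vector is $\le$ a later one). By Lemma~\ref{lem:control} this sequence is $(c_0 + w_{\max} + |Q|)$-controlled by $k(x) = 2^x + x^2$, and $k \in \mathfrak{F}_2$ since $F_2$ is of exponential growth. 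I can therefore invoke Lemma~\ref{lem:length-function} with $d = |Q|+2$, $i = 2$, and $\kappa = k$: it yields a length-bounding function $h = L_{|Q|+2,k} \in \mathfrak{F}_{|Q|+3}$ such that $h(c_0 + |Q| + w_{\max})$ bounds the length of the vector sequence. Adding back the (possibly negative) final function, the length of $s$ itself is at most $h(c_0 + |Q| + w_{\max}) + 1$.

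It remains to turn this length bound into a size bound dominated by $\Ack$. Writing $|G| = |Q| + c_0 + w_{\max} + |\Delta| + |\Obs|$ and using monotonicity of $h$, each state $s$ is determined by a sequence of at most $h(|G|)+1$ transitions, so $|S| \le |\Delta|^{h(|G|)+1} \le 2^{(h(|G|)+1)^2}$. Because $h \in \mathfrak{F}_{|Q|+3}$, the primitive-recursiveness and domination clauses of Lemma~\ref{lem:props-fast-growing-functions} show that for each fixed $G$ this bound is a primitive-recursive function of $|G|$; and since $\Ack$ dominates every primitive-recursive function, we obtain $|S| = \mathcal{O}(\Ack(|G|))$. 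Constructing $H$ and solving the resulting safety game in linear time then gives an algorithm running in Ackermannian time, which is the desired conclusion.

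The step I expect to be the main obstacle is the bookkeeping around Lemma~\ref{lem:length-function}: one must check that the induced vector sequences are genuinely non-increasing in the product order (so that condition $(b)$ of the construction of $S$, via Lemma~\ref{lem:fun-eq-vec}, really supplies the hypothesis of the lemma), fix the dimension $d = |Q|+2$ and the control level $i = 2$ so that the output lands precisely at level $\mathfrak{F}_{|Q|+3}$, and carefully separate the non-negative prefix from the possibly negative final function when passing from the length of the vector sequence to the length of $s$. The remaining arithmetic estimates are routine, and the collapse to an Ackermannian bound is immediate from the domination properties already recorded.
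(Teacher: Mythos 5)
Your proposal is correct and follows essentially the same route as the paper: reduction via Lemma~\ref{lem:safety-energy}, controlling the induced vector sequences with Lemma~\ref{lem:control}, invoking Lemma~\ref{lem:length-function} with $d = |Q|+2$ and $i = 2$ to get $h \in \mathfrak{F}_{|Q|+3}$, and then passing through the $|\Delta|^{h(|G|)+1} \le 2^{(h(|G|)+1)^2}$ bound and the domination of primitive-recursive functions by $\Ack$. The bookkeeping points you flag as potential obstacles are exactly the ones the paper handles, and in the same way.
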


\section{Lower bound}
In the sequel we will establish Ackermannian hardness of the fixed initial
credit problem, thus giving a negative answer to the question of whether the
problem has a primitive-recursive algorithm. This question is of particular
interest in light of recent work by Jurdzi\'{n}ski et al.~\cite{jls15} in which
it is shown the same problem is \TWOEXP-complete for multi-dimensional games
with full observation.

To begin, we will formally define the \ACK\ complexity class---using the
hierarchies of functions introduced in the previous section. We will then
adapt the translation from Minsky machines to partial-observation energy games
presented in~\cite{ddgrt10} (to argue the unknown initial credit problem is
undecidable) and reduce the existence of a halting run with bounded
counter values in the original machine to the fixed initial credit problem in
the constructed game. Finally, we will describe how to make sure the bound on
the counters is Ackermannian (without explicitly computing the Ackermann
function during the reduction).

\paragraph*{The complexity class}
We adopt here the definition proposed by Schmitz~\cite{schmitz16} for the class
of Ackermannian decision problems:
\[ \textstyle
	\ACK \defeq \bigcup_{g \in \mathfrak{F}_{< \omega}}
	\DTIME\left(\Ack\left(g(n)\right)\right)
\]
where $\mathfrak{F}_{< \omega} \defeq \bigcup_{i \in \mathbb{N}}
\mathfrak{F}_i$.  Note that we allow ourselves any kind of primitive-recursive
reduction. It was shown in~\cite{schmitz16} that: for any two functions $g,f \in
\mathfrak{F}_{<\omega}$, there exists $p$ in $\mathfrak{F}_{<\omega}$ such that
$f \circ \Ack \circ g$ is dominated by $\Ack \circ p$. It follows the distinction
between time-bounded and space-bounded computations is actually irrelevant here
since $F_2$ is already of exponential growth.

\subsection{Minsky machine simulation}\label{sec:2cm-simulation}
A \emph{$2$-counter Minsky machine} ($2$CM) consists of a finite
set of control states $Q$, initial and final states $q_I,q_F \in Q$, a set of
two counters $C$, and a finite set of instructions which act on the counters.
Namely, $inc_k$ increases the value of counter $k$ by $1$, $dec_k$ decreases the
same value by $1$. Additionally, $0?_k$ serves as a \emph{zero-check} on counter
$k$ which blocks if the value of counter $k$ is not equal to $0$. More formally,
the transition relation $\delta$ contains
tuples $(q,\iota,k,q')$ where $q,q' \in Q$ are source and
target states respectively, $\iota$ is an instruction from $\{inc,dec,0?\}$
which is applied to counter $k \in C$. We focus here on deterministic $2$CMs,
i.e. for every state $q \in Q$ either
\begin{itemize}[nolistsep]
	\item $\delta$ has exactly one outgoing transition, which is 
		an increase instruction, i.e.
		$(q,\iota,\cdot,\cdot)$ with $\iota = inc$; or
	\item $\delta$ has two transitions: a decrease
		$(q,dec,k,\cdot)$ and a zero-check $(q,0?,k,\cdot)$ instruction.
\end{itemize}
We write
$|M|$ instead of $|Q|$ to denote the \emph{size of $M$}.  A \emph{configuration
of $M$} is a pair $(q,v)$ of a state $q \in Q$ and a
valuation $v : C \to \mathbb{N}$.\footnote{Note that we consider the variant of
Minsky machines which guards all decreases with zero-checks. Hence, all
counters will have only non-negative values at all times.} A 
\emph{run} of $M$ is a finite sequence $\rho = (q_0,v_0)
\delta_0 \dots \delta_{n-1} (q_n,v_n)$ such that $q_0 = q_I$, $v_0(k) = 0$
for all $k \in C$, and $v_{i+1}$ is the correct valuation
of the counters after applying $\delta_i$ to $v_i$ for all $1 \le i \le
n$. A run $(q_0,v_0) \delta_0 \dots \delta_{n-1} (q_n,v_n)$ is \emph{halting} if
$q_n = q_F$ and it is \emph{$m$-bounded} if $v_i(k) \le
m$ for all $0 \le i \le n$ and all $k \in C$.

\begin{problem}[$f$-Bounded halting problem]
	Given a $2$CM $M$, decide whether $M$ has an
	$f(|M|)$-bounded halting run.
\end{problem}
This bounded version of the halting problem is decidable. However, if $f =
\Ack$, then it is \ACK-complete~\cite{ss12,schmitz16}.

\begin{lemma}\label{lem:short-witness}
	A $2$CM $M$ has an
	$f(|M|)$-bounded halting run if and only if it has an $f(|M|)$-bounded
	halting run of length at most $|M|(f(|M|))^2$.
\end{lemma}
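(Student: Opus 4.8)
The plan is to treat the two directions asymmetrically. The ``if'' direction is immediate: a run of length at most $|M|(f(|M|))^2$ that is halting and $f(|M|)$-bounded is, in particular, an $f(|M|)$-bounded halting run, so nothing is to be shown. All the content is in the ``only if'' direction, which I would prove by a standard \emph{short-witness} (loop-removal) argument on the space of reachable configurations.

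Write $m \defeq f(|M|)$. First I would bound the number of configurations that can occur along an $m$-bounded run. Such a configuration is a pair $(q,v)$ with $q$ ranging over the $|M|$ control states and each of the two counters taking a value in $\{0,1,\dots,m\}$; it is therefore drawn from a set of size at most $|M|\,(m+1)^2 = O\big(|M|\,m^2\big)$. This is precisely the step where $m$-boundedness is essential: without it the configuration space is infinite and no length bound can follow, so it is $m$-boundedness that ties the length of the witness to $f(|M|)$.

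Next, given any $m$-bounded halting run $\rho = (q_0,v_0)\,\delta_0 \cdots \delta_{N-1}\,(q_N,v_N)$ whose length $N$ exceeds the configuration count above, the pigeonhole principle yields two positions $i < j$ with $(q_i,v_i) = (q_j,v_j)$. Since a $2$CM transition is enabled purely by the current configuration, the instruction $\delta_j$ is also enabled at position $i$, and so the excised sequence $(q_0,v_0)\cdots(q_i,v_i)\,\delta_j\,(q_{j+1},v_{j+1})\cdots(q_N,v_N)$ is again a valid run. It still ends in $q_F$ (hence is halting), it reuses only configurations already appearing in $\rho$ (hence stays $m$-bounded), and it is strictly shorter. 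Iterating this excision until no configuration repeats produces an $m$-bounded halting run whose configurations are pairwise distinct, and whose length is therefore at most the configuration count $|M|(m+1)^2$, which matches the claimed bound $|M|(f(|M|))^2$ up to the precise constant.

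Because $M$ is deterministic one can argue even more directly, as a sanity check: a repeated configuration would force the run to be ultimately periodic, contradicting the fact that it halts and stops at $q_F$; hence no configuration repeats in the first place and the same bound follows without any excision. Either way the proof is routine, and I do not expect a genuine obstacle here. The only points requiring care are that the excision preserves \emph{both} the halting condition and $m$-boundedness---which it does, since it introduces no new configuration and reuses an already-available transition---and the elementary bookkeeping of the $m$-bounded configuration count, which is where the quadratic dependence on $f(|M|)$ in the statement comes from.
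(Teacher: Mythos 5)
Your proof is correct, but its primary route is genuinely different from the paper's. The paper argues contrapositively and leans entirely on determinism: there is a single computation to consider, and if it neither pushes a counter above $f(|M|)$ nor reaches $q_F$ within $|M|(f(|M|))^2$ steps, then it must repeat a configuration, and by determinism it then cycles forever and never halts---no surgery on runs is needed. Your main argument is the pigeonhole-plus-loop-excision one, which never invokes determinism and hence proves the lemma even for nondeterministic $2$CMs; that extra generality is what your route buys, at the cost of the excision bookkeeping (validity, halting, boundedness), which you carry out correctly. Your secondary, determinism-based ``sanity check'' is essentially the paper's proof, but as phrased it is slightly too quick: a halting run as defined merely \emph{ends} at $q_F$, so it may visit $q_F$, leave, and return, and such a run \emph{can} repeat configurations; the correct statement is that a \emph{shortest} bounded halting run (the prefix of the unique computation up to its first visit of $q_F$) is repeat-free, which is what the paper's contrapositive formulation implicitly exploits. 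This imprecision is immaterial here since your excision argument is self-contained. Finally, the constant slack you flag is real but shared with the paper: counting distinct bounded configurations gives length at most $|M|(f(|M|)+1)^2$ rather than $|M|(f(|M|))^2$, so neither proof literally attains the stated constant; this is harmless for the \ACK-hardness application, as the budgets in the gadgets of Section~\ref{sec:2cm-simulation} can be enlarged accordingly.
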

\begin{proof}
	We focus on the only non-trivial direction. If no $f(|M|)$-bounded run
	of the machine does reach $q_f$ in at most $|M|(f(|M|))^2$ steps, then
	we have two possibilities. It could be the case that the machine has a
	counter whose value goes above $f(|M|)$ before reaching $q_F$.
	Clearly, $M$ has no $f(|M|)$-bounded halting run in this case.
	Otherwise, the machine must have repeated at least one configuration.
	Since the machine is deterministic, this means it will never
	halt (i.e. it will cycle without reaching $q_F$).
\end{proof}

For the rest of this section, let us 
consider a fixed $2$CM $M = (Q,q_I,q_F,C,\delta)$.

\paragraph*{Energy game for the $\mathsf{id}$-bounded halting problem}
We will now construct a blind energy game $G_M$ such that $M$ has an
$|M|$-bounded halting run if and only if there is a winning observation-based
strategy for \eve in $G_M$ given initial credit $c_0 = 0$, i.e. we reduce from
the $\mathsf{id}$-bounded halting problem where $\mathsf{id}(x) = x$ is the identity
function. Her winning observation-based strategy will be to
perpetually simulate the halting run of the machine. We set the alphabet
$\Sigma$ of $G_M$ to be the set of transitions of $M$ plus a fresh symbol $\#$,
that is $\Sigma = \delta \cup \{\#\}$.  The game $G_M$ starts with a
non-deterministic transition into one of several gadgets we will describe now.
The weight of the transition into each gadget is shown (labelling the initial
arrow) in the Figures for the gadgets. Each gadget will check the sequence of
letters played by \eve has some specific property, lest some play will get a
negative energy level. Since the game is blind, \eve will not know which gadget
has been chosen and will therefore have to make sure her strategy (infinite
word) has all the properties checked by each gadget.

The gadget depicted in Figure~\ref{fig:gadget1} makes sure that \eve plays $\#$
as her first letter. Indeed, if she plays a strategy
which does not comply then there is a play which will end up in the $-1$ loop
and thus gets a negative energy level. If she plays $\#$, all
plays in this gadget go to the $0$ loop and will never have a negative energy
level. To make sure that after $\#$ \eve plays the first transition from $M$,
and then the second, \dots, we have $|\delta| + 1$
instances of the gadget from Figure~\ref{fig:gadget2}. (Additionally, after
having played a transition leading to $q_F$ in $M$---i.e. of the form
$(\cdot,\cdot,\cdot,q_F)$---she must play $\#$ once more.) The intuition behind
the gadget is simple: if she violates the order of the sequence of transitions
then there will be a play consistent with her strategy which, in the
corresponding gadget, reaches the $-1$ loop. If she plays the letters in the
correct sequence then all plays in all gadgets can only stay in the initial
state or go to the upper $0$ loop.

\begin{figure}
\begin{center}
\begin{tikzpicture}[initial text={$0$},el]
		\node[state,initial above](A){};
		\node[state,left=of A](B){ };
		\node[state,right=of A](C){ };
		
		\path
		(A) edge node[el,swap] {$\#,0$} (B)
		(A) edge node[el] {$\Sigma \setminus \{\#\},0$} (C)
		(B) edge[loop above] node[el] {$\Sigma,0$} (B)
		(C) edge[loop above] node[el] {$\Sigma,-1$} (C)
		;
\end{tikzpicture}
\end{center}
\caption{Gadget which ensures the first letter played by \eve is $\#$.}
\label{fig:gadget1}
\end{figure}
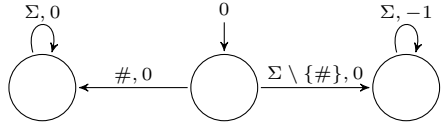

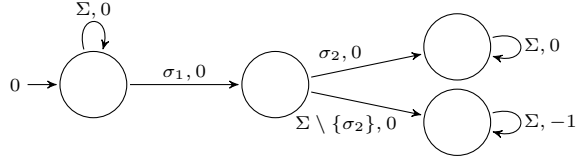
\begin{figure}
\begin{center}
\begin{tikzpicture}[initial text={$0$},el]
		\node[state,initial](A){};
		\node[state,right=of A](B){ };
		\node[state,right=of B,yshift=0.5cm](C){ };
		\node[state,right=of B,yshift=-0.5cm](D){ };

		\path
		(A) edge[loop above] node[el] {$\Sigma,0$} (A)
		(A) edge node[el] {$\sigma_1,0$} (B)
		(B) edge node[el] {$\sigma_2,0$} (C)
		(B) edge node[el,pos=0.8,swap] {$\Sigma \setminus \{\sigma_2\}, 0$}
			(D)
		(C) edge[loop right] node[el] {$\Sigma,0$} (C)
		(D) edge[loop right] node[el] {$\Sigma,-1$} (D)
		;
\end{tikzpicture}
\end{center}
\caption{Gadget which ensures $\sigma_1$ is followed by $\sigma_2$.}
\label{fig:gadget2}
\end{figure}

To verify that \eve plays the letter $\#$ infinitely often, symbolizing the
start of a new simulation of $M$ every time, the game $G_M$ includes the gadget
shown in Figure~\ref{fig:gadget3}.  If \eve plays the
letter $\#$ infinitely often and within $|M|^3$ rounds of each other, all the
plays in the gadget will take the only negatively-weighted transition in the
gadget at most $|M|^3$ times. Hence, all plays in this gadget will never have a
negative energy level. If, however, \eve plays in any other way (eventually
stopping with the letter $\#$ or taking too long to produce it) then there will
be a play which reaches the middle state of the gadget and take the negative
transition enough times to get a negative energy level. In
Figure~\ref{fig:gadget4} we can see a very simple gadget which is instantiated
for each $k \in C$. Its task is to ensure \eve does not play a sequence of
$inc_k$ and $dec_k$ which results in the value of $k$ being larger than $|M|$.

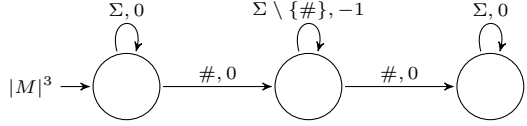
\begin{figure}
\begin{center}
\begin{tikzpicture}[initial text={$|M|^3$},el]
		\node[state,initial](A){ };
		\node[state,right=of A](B){ };
		\node[state,right=of B](C){ };

		\path
		(A) edge[loop above] node[el] {$\Sigma,0$} (A)
		(A) edge node[el] {$\#,0$} (B)
		(B) edge[loop above] node[el] {$\Sigma \setminus
			\{\#\},-1$} (B)
		(B) edge node[el] {$\#,0$} (C)
		(C) edge[loop above] node[el] {$\Sigma,0$} (C)
		;
\end{tikzpicture}
\end{center}
\caption{Gadget which ensures that \eve restarts her simulation of $M$
	infinitely often and that each simulation is of length at most $|M|^3$.}
\label{fig:gadget3}
\end{figure}

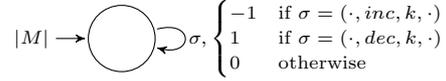
\begin{figure}
\begin{center}
\begin{tikzpicture}[initial text={$|M|$},el]
		\node[state,initial](A){ };

		\path
		(A) edge[loop right] node[el] {$\sigma, \begin{cases}
			-1 & \text{if } \sigma = (\cdot,inc,k,\cdot)\\
			1 & \text{if } \sigma = (\cdot,dec,k,\cdot)\\
			0 & \text{otherwise}
			\end{cases}$} (A)

		;
\end{tikzpicture}
\end{center}
\caption{Gadget which ensures that \eve respects the bound on the counters.}
\label{fig:gadget4}
\end{figure}

Finally, we make sure that \eve does not play a transition which executes a
zero-check incorrectly. To do so, for each $k \in C$ we add to $G_M$ an instance
of the gadget from Figure~\ref{fig:gadget5}.  The intuition of how the gadget
works is as follows. If \eve executes a $0?_k$ instruction when the counter
value is positive (a \emph{zero-cheat}), then there will be a play which goes to
the top-right corner of the gadget and simulates the inverse of the intended
operations on $k$ and moves back to the initial state on the zero-cheat. This
play thus far has an energy level of at most $|M|^3 -1$. If \eve executes a $dec_k$
instruction when the counter value is $0$ (a \emph{positive cheat}), then there
will be a play which goes to the bottom-left corner of the gadget and simulates
the operations on $k$ and moves back to the initial state on the positive cheat.
The latter play also has an energy level of at most $|M|^3 -1$. It follows
that if \eve cheats more than $|M|^3$ times, there will be a play in the gadget
with negative energy level.  If, however, she correctly simulates the zero
checks, then a play can forever stay in the initial state---in which case it
will never have a negative energy level---or it can move to the top-right or
bottom-left corner at some point. There, if \eve is simulating a finite halting
run $\rho$, she will play $\#$ again in at most $|\rho| + 1$ steps. If the play
is still in one those corners at that moment, then we know it moves to the
bottom-right state and that it must have seen a weight of $-1$ at most $|\rho|$
times. Clearly, once there, the play can no longer have a negative energy level.
If the play returned to the initial state before that then, since \eve is not
cheating, the energy level of the play must have been at least $c_0 + |\calM|^3$
and it must have seen a weight of $-1$ at most $|\rho|$ times.

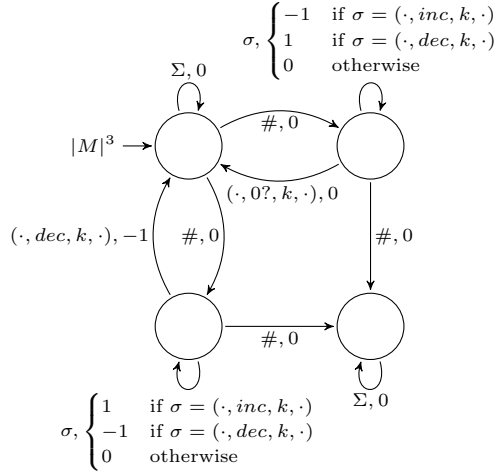
\begin{figure}
\begin{center}
\begin{tikzpicture}[initial text={$|M|^3$},el]
		\node[state,initial](A){ };
		\node[state,right=of A](B){ };
		\node[state,below=of A](C){ };
		\node[state,below=of B](D){ };

		\path
		(A) edge[loop above] node[el] {$\Sigma,0$} (A)
		(A) edge[bend left] node[swap,el] {$\#,0$} (B)
		(B) edge[loop above] node[el] {$\sigma,\begin{cases}
			-1 & \text{if } \sigma = (\cdot,inc,k,\cdot)\\
			1 & \text{if } \sigma = (\cdot,dec,k,\cdot)\\
			0 & \text{otherwise}
			\end{cases}$} (B)
		(B) edge[bend left] node[el,pos=0.48] {$(\cdot,0?,k,\cdot), 0$}
			(A)
		(B) edge node[el] {$\#,0$} (D)
		(D) edge[loop below] node[el] {$\Sigma,0$} (D)
		(C) edge node[el,swap] {$\#,0$} (D)
		(A) edge[bend left] node[el,swap] {$\#,0$} (C)
		(C) edge[bend left] node[el]
			{$(\cdot,dec,k,\cdot),-1$} (A)
		(C) edge[loop below] node[el] {$\sigma, \begin{cases}
			1 & \text{if } \sigma = (\cdot,inc,k,\cdot)\\
			-1 & \text{if } \sigma = (\cdot,dec,k,\cdot)\\
			0 & \text{otherwise}
			\end{cases}$} (C)
		;
\end{tikzpicture}
\end{center}
\caption{Gadget which ensures \eve correctly resolves the guarded decreases:
	executing a $dec_k$ instruction only when $k > 0$ and executing the $0?_k$
	instruction otherwise.}
\label{fig:gadget5}
\end{figure}

\begin{proposition}\label{pro:cm-2-eg}
	$M$ has an $|M|$-bounded halting run if and only if there is a winning
	observation-based strategy for \eve in $G_M$ given initial credit $c_0 =
	0$.
\end{proposition}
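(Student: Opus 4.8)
The plan is to establish the two implications separately, in each case exploiting that $G_M$ is blind. Since every state carries the same observation, an observation-based strategy for \eve is simply an infinite word $\pi = \sigma_0\sigma_1\cdots \in \Sigma^\omega$, and the only choices left to \adam are the initial non-deterministic move into one of the gadgets of Figures~\ref{fig:gadget1}--\ref{fig:gadget5} together with the branch choices inside that gadget. Thus \eve wins from $c_0 = 0$ iff for every gadget and every \adam-play inside it the energy level stays non-negative. Before starting I would assume, without loss of generality, that $M$ has been normalized so that every halting run reaches $q_F$ with both counters equal to $0$; this standard transformation keeps the bounded halting problem $\ACK$-complete and is exactly what lets the counter-tracking gadgets return to their entry credit at each restart.

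For the direction from machine to game, suppose $M$ has an $|M|$-bounded halting run. By Lemma~\ref{lem:short-witness} (taking $f = \mathsf{id}$) I may fix such a run $\rho$ of length at most $|M|^3$, ending at $q_F$ with empty counters. I would let \eve perpetually re-simulate it, i.e. play $\pi = (\#\cdot\rho)^\omega$, and then verify gadget by gadget that no play goes negative: Figure~\ref{fig:gadget1} holds because the first letter is $\#$; the Figure~\ref{fig:gadget2} instances hold because inside each block the transitions chain into a legal path of $M$ from $q_I$ and a block is closed by $\#$ exactly after a move into $q_F$; Figure~\ref{fig:gadget3} holds because $\#$ recurs and consecutive occurrences are $|\rho|\le|M|^3$ apart, matching the entry credit; Figure~\ref{fig:gadget4} holds because each counter stays in $[0,|M|]$ and, crucially, is back to $0$ at every $\#$, so the tracked value never drifts across restarts; and Figure~\ref{fig:gadget5} holds because, as \eve never cheats, every \adam-excursion into the top-right or bottom-left branch nets a non-negative weight when the corresponding zero-check or decrement is resolved correctly, while the transient swings are bounded by $|M|\le|M|^3$ and hence absorbed by the credit.

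For the converse, I would start from an arbitrary winning word $\pi$ and read a halting run off it, using that $\pi$ must defeat every \adam-play. From Figure~\ref{fig:gadget1}, $\sigma_0 = \#$. From Figure~\ref{fig:gadget3}, $\#$ occurs infinitely often with consecutive occurrences at most $|M|^3$ apart, so $\pi = \#\,w_1\,\#\,w_2\cdots$ with each $w_i$ a word over $\delta$ of length at most $|M|^3$. From the Figure~\ref{fig:gadget2} instances, each $w_i$ is a legal chaining of transitions starting at $q_I$, and $\#$ may follow a transition only when that transition enters $q_F$. From Figure~\ref{fig:gadget4}, the cumulative counter values never exceed $|M|$; since the game has just started, the first block $w_1$ is therefore genuinely $|M|$-bounded. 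From Figure~\ref{fig:gadget5}, every $0?_k$ is played at counter value $0$ and every $dec_k$ at a positive value. Putting these together, $w_1$ is a correctly-guarded, $|M|$-bounded run of $M$ from $q_I$ to $q_F$, i.e. an $|M|$-bounded halting run, which is what I need.

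The step I expect to be the main obstacle is the analysis of the gadget in Figure~\ref{fig:gadget5}. One must show, for the hardness direction, that a zero-cheat (playing $0?_k$ while $k>0$) and a positive cheat (playing $dec_k$ while $k=0$) each strictly lower the energy of the matching \adam-play, that these losses accumulate so that more than $|M|^3$ cheats force a negative level, and, for the soundness direction, that correct play leaves every branch non-negative. The delicate points are the mirrored weightings of the two branches (one accumulating the counter value, the other its negation), the way the $\#$ move re-synchronizes the tracked quantity with the simulated counter at each restart, and the verification that the $|M|^3$ entry credit dominates every transient excursion. Isolating a clean invariant that relates the weight accumulated inside a branch to the current counter value is, I expect, what makes this gadget's correctness go through cleanly.
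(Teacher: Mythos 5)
Your proof follows the same overall route as the paper's: the forward direction is identical (fix a short witness via Lemma~\ref{lem:short-witness}, let \eve play $(\#\rho)^\omega$, and check the gadgets one by one), and your backward direction is the direct form of the paper's contrapositive argument. Two of your choices are genuine improvements over the paper's write-up. First, your normalization assumption (halting runs end with both counters at $0$) addresses a point the paper silently glosses over: the gadget of Figure~\ref{fig:gadget4} never resets, so if $\rho$ ended with a positive counter value, each repetition of $\#\rho$ would push the cumulative increment count past $|M|$ and the paper's own strategy $(\#\rho)^\omega$ would lose there; some such assumption (or another fix) really is needed. Second, reading the halting run off the \emph{first} block $w_1$ is the right move precisely because Figure~\ref{fig:gadget4} counts cumulatively over the whole play: only for $w_1$ do the tracked quantities coincide with the genuine counter values of a fresh run.

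The gap is exactly where you predicted it: the claim that Figure~\ref{fig:gadget5} forces every $0?_k$ to be played at counter value $0$ and every $dec_k$ at a positive value is stronger than what that gadget delivers. A cheat is penalized only when a play of \adam ``cashes it in'' by entering the appropriate corner at the preceding $\#$ and leaving on the cheating letter, and each play can charge at most one cheat per restart against the $|M|^3$ entry credit. Hence a winning word may contain a bounded number of cheats, and in particular $w_1$ itself may be a cheating block, so it need not be a run of $M$ at all. Worse, the tolerated cheats interact with Figure~\ref{fig:gadget4}: a spurious $dec_k$ played at counter value $0$ in an early block raises that gadget's energy by $1$ at a one-time worst-case cost of $1$ in Figure~\ref{fig:gadget5}, so with credits $|M|^3$ versus $|M|$ the cheat allowance is large enough to offset the counter bound and let \eve survive faithful simulations whose genuine counter values exceed $|M|$. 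Closing the argument therefore requires more than citing the gadget; one must bound, or redesign away, this interaction between the two gadgets. In fairness, the paper's own proof makes the very same leap (``\eve cannot play a strategy which does not correspond to a valid run of $M$ or there will be a play with negative energy level in gadgets $1$, $2$, or $5$''), so your attempt is no less rigorous than the published argument --- but held to full rigor, this is the step that fails in both.
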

\begin{proof}
	If $M$ does have an $|M|$-bounded halting run $\rho$, then we can assume
	that $\rho$ has length at most $|M|^3$ (see
	Lemma~\ref{lem:short-witness}). \eve can then play the blind strategy
	which corresponds to the infinite word $(\# \rho)^\omega$. Since this
	word satisfies all the constraints ensured by the gadgets in $G_M$, no
	play will ever have negative energy level. 
	
	Suppose $M$ has no $|M|$-bounded halting run. \eve cannot play a
	strategy which does not correspond to a valid run of $M$ or there will
	be a play with negative energy level in gadgets $1$, $2$, or $5$. Thus,
	let us assume she does simulate $|M|$ faithfully. We now consider two
	cases depending on whether $|M|$ has a halting run (which is,
	necessarily, not $|M|$-bounded). If $|M|$ has a halting
	run which is not $|M|$-bounded, then a play with negative energy level
	can be constructed in gadget $4$. Similarly, if her simulation of $|M|$
	stays $|M|$-bounded but takes longer than $|M|^3$ steps (because it is
	not halting), the strategy will not be winning because of gadget $3$.
	Hence, she has no observation-based winning strategy.
\end{proof}

We will now generalize the reduction we just presented and use it to prove the
announced \ACK-hardness result. In short, we need to make sure the energy level
of plays entering gadget $4$ becomes $\Ack(|M|)$ and the energy level of plays
entering gadgets $3$ and $5$ becomes greater than $|M|(\Ack(|M|))^2$. The way
in which we propose to do so is to add an initial gadget to $G_M$ which allows
\eve to play a specific sequence of letters to get the required energy
level---and no more---and then forces here into the simulation of $M$ which we
have already described. As a first step, we describe a way of computing $\Ack$
using vectors.

\subsection{Vectorial version of the fast-growing functions}
We will now give a vectorial-based definition of $F_k$, for any $k \in
\mathbb{N}$.
Intuitively, we will use $k + 1$ dimensions to keep track of how
many iterations of $F_j$ (for $0 \le j \le k$) still need to be applied on the
current intermediate value. More formally, for a vector $\mathbf{a} = (
a_k,\dots,a_0 ) \in \mathbb{N}^{k+1}$ we set
\[
	\Phi(\mathbf{a};x) = \Phi(a_k,\dots,a_0;x) \defeq
	F_k^{a_k}(\dots F_1^{a_1}(F_0^{a_0}(x))).
\]
It follows that $\Ack(x) = \Phi(1,\overbrace{0,\dots,0}^{x \text{ times}};x)$.
The following is the key property associated with $\Phi$.
\begin{lemma}\label{lem:key-vectorial}
	For vectors $\mathbf{a},\mathbf{b} \in \mathbb{N}^{k+1}$ and $x,y \in
	\mathbb{N}$, if $\mathbf{a} \le \mathbf{b}$ and $x \le y$
	then $\Phi(\mathbf{a};x) \le \Phi(\mathbf{b};y)$.
\end{lemma}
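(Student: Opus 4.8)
The plan is to prove both monotonicities simultaneously by induction on the dimension $k$, peeling off the outermost iterated application $F_k^{a_k}$ and reducing to a vector in $\mathbb{N}^{k}$. Before starting the induction I would extract two elementary facts about the fast-growing functions from Lemma~\ref{lem:props-fast-growing-functions}: each $F_i$ is \emph{monotone} (instantiate the stated inequality with the two subscripts equal, giving $F_i(m) \ge F_i(n)$ whenever $n \le m$) and each $F_i$ is \emph{inflationary}, i.e.\ $F_i(z) \ge z$ (instantiate with the larger subscript $i$ and the smaller subscript $0$, and $m = n = z$, giving $F_i(z) \ge F_0(z) = z + 1 \ge z$). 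From the inflationary property it follows at once that iterating more often can only increase the value: $F_i^{a}(z) \le F_i^{b}(z)$ whenever $a \le b$, since $F_i^{b}(z) = F_i^{b-a}\!\left(F_i^{a}(z)\right) \ge F_i^{a}(z)$.

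For the base case $k = 0$ I would simply compute $\Phi(a_0; x) = F_0^{a_0}(x) = x + a_0$, so that $\mathbf{a} \le \mathbf{b}$ and $x \le y$ give $x + a_0 \le y + b_0$ directly. For the inductive step, write $\mathbf{a} = (a_k, \mathbf{a}')$ and $\mathbf{b} = (b_k, \mathbf{b}')$ with $\mathbf{a}', \mathbf{b}' \in \mathbb{N}^{k}$, and observe from the definition of $\Phi$ that $\Phi(\mathbf{a}; x) = F_k^{a_k}\!\left(\Phi(\mathbf{a}'; x)\right)$, and likewise for $\mathbf{b}$ and $y$. The induction hypothesis, applied to $\mathbf{a}' \le \mathbf{b}'$ and $x \le y$, then yields $z \le w$, where $z = \Phi(\mathbf{a}'; x)$ and $w = \Phi(\mathbf{b}'; y)$.

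The key step is to combine the two monotonicities in the correct order. Since $F_k$ is monotone and $z \le w$, I get $F_k^{a_k}(z) \le F_k^{a_k}(w)$; since $a_k \le b_k$ and $F_k$ is inflationary, the iteration inequality above gives $F_k^{a_k}(w) \le F_k^{b_k}(w)$. Chaining these two yields $\Phi(\mathbf{a}; x) = F_k^{a_k}(z) \le F_k^{b_k}(w) = \Phi(\mathbf{b}; y)$, closing the induction. The only place that requires care — and the step I would flag as the main (admittedly minor) obstacle — is precisely this last combination: the growth in the exponent $a_k \le b_k$ must be dispatched via the inflationary property applied at the \emph{larger} argument $w$, while the growth in the argument $z \le w$ is handled by plain monotonicity, rather than attempting to treat both increases at once.
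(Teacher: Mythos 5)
Your proof is correct and follows the same route the paper intends: the paper's proof is the single remark that the lemma is ``a direct consequence of the definition of $\Phi$ and Lemma~\ref{lem:props-fast-growing-functions},'' and your induction on $k$ --- using monotonicity of each $F_i$ (from the case $i=j$ of that lemma) and the inflationary property $F_i(z) \ge z$ to handle the exponents --- is exactly the argument that remark leaves implicit. Nothing to flag; your care in separating the argument increase ($z \le w$, via monotonicity) from the exponent increase ($a_k \le b_k$, via inflationarity at $w$) is precisely the right bookkeeping.
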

\begin{proof}
	It is a direct consequence of the definition of $\Phi$ and
	Lemma~\ref{lem:props-fast-growing-functions}.
\end{proof}

We consider the following (family of) rewrite rules $N0$, $N1_j$, and $N2$:
\begin{align*}
	\Phi(\mathbf{a};x) &\to_{N0} x\\
	\Phi(\dots,a_j + 1, a_{j-1}, \dots; x) &\to_{N1_j}
	\Phi(\dots,a_j,x + 1, \dots; x)\\
	\Phi(a_k,\dots,a_0 + 1; x) &\to_{N2} \Phi(a_k,\dots,a_0;x + 1)
\end{align*}
For simplicity, denote the set of rewrite rules $\{N1_j \st 0 < j \le k\}$ by
$N1$. Let us write $(\mathbf{b},y) \leadsto_r (\mathbf{a},x)$ and $(\mathbf{b},y)
\leadsto_N (\mathbf{a},x)$ if rule $r$ or, respectively, a sequence of the rules
$N1$ and $N2$, can be applied to $\Phi(\mathbf{b};y)$ to transform it into
$\Phi(\mathbf{a};x)$.\footnote{Since the rule $N0$ yields a single number, it
cannot be the case that $(\mathbf{b},y) \leadsto_{N0} (\mathbf{a},x)$.}
We remark that $(\mathbf{b},y) \leadsto_N (\mathbf{a},x)$ implies $\mathbf{a}$
is smaller than $\mathbf{b}$ for the lexicographical order. It follows that the
application of the rewrite rules always terminates.
\begin{lemma}\label{lem:termination}
	For any vector $\mathbf{b} \in \mathbb{N}^{k+1}$ and $x \in \mathbb{N}$,
	the set $\{\mathbf{a} \in \mathbb{N}^{k+1} \st \exists y \in \mathbb{N}
	: (\mathbf{b},y) \leadsto_N (\mathbf{a},x)\}$ is finite.
\end{lemma}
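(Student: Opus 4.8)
The plan is to bound every coordinate of every reachable vector $\mathbf{a}$ by a quantity depending only on $\mathbf{b}$ and $x$, so that the whole reachable set is trapped inside a finite box $\prod_{j=0}^{k}\{0,\dots,N_j\}$.

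First I would record the crucial monotonicity property of the number-argument. Rule $N1_j$ leaves the number-argument unchanged, while rule $N2$ increases it by exactly one; hence along any sequence of $N1$ and $N2$ rules the number-argument is non-decreasing. A sequence witnessing $(\mathbf{b},y) \leadsto_N (\mathbf{a},x)$ therefore starts at number-argument $y$ and ends at $x$, which forces $y \le x$. Moreover, since the number-argument can never come back down once it has passed $x$, every intermediate number-argument $x'$ occurring along such a sequence satisfies $y \le x' \le x$.

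Next I would analyse how the coordinates can change. No rule ever increases $a_k$, because $N1_j$ only writes into coordinate $j-1 \le k-1$; thus $a_k \le b_k$ throughout. For $j < k$, the only rule that can raise $a_j$ is $N1_{j+1}$, which overwrites $a_j$ with $x'+1$, where $x'$ is the current number-argument; by the previous paragraph $x' \le x$, so any such write sets $a_j$ to a value at most $x+1$. Every other applicable rule either leaves $a_j$ unchanged or decreases it (for $j=0$ only $N2$ touches it, and only decreases it). A straightforward induction on the length of the rewrite sequence then establishes the invariant $a_j \le \max\{b_j,\, x+1\}$ for all $0 \le j \le k$ at every step: the base case is $\mathbf{a}=\mathbf{b}$, and in the inductive step $N2$ only decreases coordinates while $N1_j$ decreases $a_j$ and sets $a_{j-1} = x'+1 \le x+1 \le \max\{b_{j-1},\, x+1\}$.

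Finally I would conclude: every $\mathbf{a}$ in the set satisfies $0 \le a_j \le \max\{b_j,\, x+1\}$ for each $j$, so the set is contained in the finite product $\prod_{j=0}^{k}\{0,\dots,\max\{b_j,\, x+1\}\}$ and is therefore finite. The only genuinely subtle point—and the one I would flag as the crux—is that it is precisely the fixing of the \emph{final} number-argument to $x$ that makes the insertions performed by the $N1_j$ rules bounded; without that constraint the number-argument could grow without bound, and the freshly written coordinate values $x'+1$ would be unbounded as well, defeating the finiteness claim.
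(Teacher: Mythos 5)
Your proof is correct, but it takes a genuinely different route from the paper's. The paper states this lemma without a formal proof: its justification is the remark immediately preceding it, namely that every application of a rule $N1_j$ or $N2$ strictly decreases the vector in the \emph{lexicographic} order on $\mathbb{N}^{k+1}$, hence rewriting terminates; finiteness of the reachable set then follows because the rewrite relation is finitely branching (there are only $k+1$ rules, each producing a unique result) and, as you also note, only initial arguments $y \le x$ are relevant---a K\"onig-style argument. You instead trap every reachable vector inside the explicit box $\prod_{j=0}^{k}\{0,\dots,\max\{b_j,x+1\}\}$, using two observations: the number-argument is non-decreasing along any rewrite sequence (so every intermediate argument is at most the final one, $x$), and the only way a coordinate can increase is an $N1_{j+1}$-write of value $x'+1 \le x+1$. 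What your approach buys is a quantitative strengthening: concrete bounds on all reachable vectors, with no appeal to well-foundedness or K\"onig's lemma. What the paper's approach buys is \emph{termination}---the non-existence of infinite rewrite sequences---which is strictly stronger than finiteness of the reachable set: your box bound by itself would not exclude cycling forever among finitely many vectors (it is the lexicographic decrease that rules this out), and that stronger property is what the proof of Theorem~\ref{thm:lower} really needs when it cites this lemma to force \eve to eventually play $N0$. As a proof of the lemma exactly as stated, however, your argument is complete and sound.
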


Remark that rule $N1_j$ can be applied to $\Phi(\mathbf{a};x)$ for any $0 < j
\le k$ as long as $a_{j} > 0$. We would like to argue that the ``best'' way to
use $N1_j$, in order to obtain the highest possible final value, is to do so
only if all dimensions $0 \le \ell < j$ have value $0$. Formally, let us write
$(\mathbf{b},y) \to_r (\mathbf{a},x)$ if $(\mathbf{b},y) \leadsto_r
(\mathbf{a},x)$ and, additionally, if $r = N1_j$ then it holds that $a_\ell = 0$
for all $0 \le \ell < j$. We then say the application of the rewrite rule $r$
was \emph{proper}. Similarly, we write $(\mathbf{b},y) \to_N (\mathbf{a},x)$ if
$\Phi(\mathbf{a};x)$ can be obtained by proper application of rules $N1$ and
$N2$ to $\Phi(\mathbf{b};y)$.
\begin{lemma}\label{lem:proper-ack}
	For all vectors $\mathbf{a},\mathbf{b} \in \mathbb{N}^{k+1}$ and
	$x,y \in \mathbb{N}$, if $(\mathbf{b},y) \to_N (\mathbf{a},x)$ then
	$\Phi(\mathbf{b};y) = \Phi(\mathbf{a};x)$.
\end{lemma}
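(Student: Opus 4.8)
The plan is to unwind the definition of $\to_N$: since $(\mathbf{b},y) \to_N (\mathbf{a},x)$ means that $\Phi(\mathbf{a};x)$ is reached from $\Phi(\mathbf{b};y)$ by a \emph{sequence} of proper single steps, each of which is some $\to_{N1_j}$ or $\to_{N2}$, it suffices to prove that one proper step preserves the value of $\Phi$ and then to induct on the number of steps. The base case of an empty sequence, where $(\mathbf{a},x) = (\mathbf{b},y)$, is immediate, and the inductive step simply chains one step with the induction hypothesis. The entire content of the lemma therefore reduces to a single-step invariance claim, which I would verify rule by rule; note that $N0$ never occurs in such a sequence, so only $N1$ and $N2$ need to be treated.

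I would first dispose of $N2$, which carries no properness side-condition. It rewrites $\Phi(a_k,\dots,a_0+1;x)$ into $\Phi(a_k,\dots,a_0;x+1)$. Since $F_0(z) = z+1$, we have $F_0^{a_0+1}(x) = F_0^{a_0}(F_0(x)) = F_0^{a_0}(x+1)$, and applying the common outer composition $F_k^{a_k}\circ\dots\circ F_1^{a_1}$ to both sides yields the desired equality. This is a one-line computation.

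The heart of the proof is a proper application of $N1_j$. Properness means exactly that, in the configuration being rewritten, every coordinate strictly below $j$ is zero, so the input is $\Phi(a_k,\dots,a_{j+1},a_j+1,0,\dots,0;x)$. Because $F_\ell^0$ is the identity, the inner block $F_{j-1}^0(\dots F_0^0(x))$ collapses to $x$, leaving the value $F_k^{a_k}(\dots F_{j+1}^{a_{j+1}}(F_j^{a_j+1}(x)))$. I would then peel one copy of $F_j$ off the innermost factor and invoke the defining recurrence $F_j(x) = F_{j-1}^{x+1}(x)$, obtaining $F_j^{a_j+1}(x) = F_j^{a_j}(F_{j-1}^{x+1}(x))$. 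The right-hand side of the rule is $\Phi(a_k,\dots,a_{j+1},a_j,x+1,0,\dots,0;x)$; its lower block (the coordinates below $j-1$ are still zero) again collapses to $x$, so it evaluates to the very same expression $F_k^{a_k}(\dots F_{j+1}^{a_{j+1}}(F_j^{a_j}(F_{j-1}^{x+1}(x))))$, which closes the case.

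The step I expect to be the main obstacle, or at least the conceptual crux, is isolating precisely why properness is indispensable in the $N1_j$ case. Without the hypothesis that the lower coordinates vanish, the innermost $F_j$ would be applied not to $x$ but to some intermediate value $v = F_{j-1}^{a_{j-1}}(\dots F_0^{a_0}(x))$, and unfolding it via $F_j(v) = F_{j-1}^{v+1}(v)$ would produce $F_{j-1}^{v+1}(v)$ rather than the $F_{j-1}^{x+1}(x)$ hard-wired into the rewrite; the step would then strictly alter the value. Properness is exactly the condition forcing $v = x$, so that the unfolding is faithful. I would present this observation as the real point of the lemma, the remaining manipulations being routine bookkeeping of the untouched outer composition $F_k^{a_k}\circ\dots\circ F_{j+1}^{a_{j+1}}$.
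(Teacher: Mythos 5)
Your proof is correct and takes the same route as the paper, whose entire proof is the assertion that the claim ``follows directly from the definitions of $\Phi$ and the fast-growing functions, and proper application of rules $N1$ and $N2$''; your write-up supplies exactly the computations this leaves implicit (induction on the number of proper steps, $F_0(z)=z+1$ for $N2$, and $F_j(x)=F_{j-1}^{x+1}(x)$ together with the collapse of the zero-exponent inner block for proper $N1_j$). Note also that you correctly read properness as a condition on the \emph{source} configuration, which is the paper's evident intent: its formal phrasing places the condition on the target vector $\mathbf{a}$, but that reading is vacuous since the target's $(j-1)$-st entry is $x+1\neq 0$.
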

\begin{proof}
	Follows directly from the definitions of $\Phi$ and the fast-growing
	functions, and proper application of rules $N1$ and $N2$.
\end{proof}
The above result means that proper application of the rewrite rules to
$\Phi(a_k,\dots,a_0;x)$ give us a correct computation of
$F^{a_k}_k(\dots(F^{a_0}_0(x)))$. We will now argue that improper application of
the rules will result in a smaller value.
\begin{lemma}\label{lem:improper-less-ack}
	For all vectors $\mathbf{a},\mathbf{b} \in \mathbb{N}^{k+1}$ and
	$x,y \in \mathbb{N}$, if $|\mathbf{b}|_{\infty} \le x$ and
	$(\mathbf{b},y) \leadsto_N (\mathbf{a},x)$ then
	$\Phi(\mathbf{b};y) \le \Phi(\mathbf{a};x)$.
\end{lemma}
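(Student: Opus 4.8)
The plan is to induct on the length $n$ of the rewrite sequence witnessing $(\mathbf{b},y) \leadsto_N (\mathbf{a},x)$. When $n = 0$ we have $(\mathbf{b},y) = (\mathbf{a},x)$ and the claim is an equality. For $n > 0$ I would peel off the first rule and write $(\mathbf{b},y) \leadsto_r (\mathbf{b}',y') \leadsto_N (\mathbf{a},x)$ with $r$ either $N2$ or some $N1_j$; the inductive hypothesis applied to the shorter tail gives $\Phi(\mathbf{b}';y') \le \Phi(\mathbf{a};x)$, so it suffices to establish the one-step inequality $\Phi(\mathbf{b};y) \le \Phi(\mathbf{b}';y')$ and to check that the precondition $|\mathbf{b}'|_\infty \le x$ is inherited by the tail. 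The inheritance is where the global bound $|\mathbf{b}|_\infty \le x$ first earns its keep: $N2$ only decrements the last coordinate while incrementing the base, and $N1_j$ resets one coordinate to a value governed by the current base, and keeping this value within $x$ is exactly what the hypothesis is there to guarantee.

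For $r = N2$ the step is value preserving: by definition $F_0(v) = v+1$, so $F_0^{a_0+1}(v) = F_0^{a_0}(F_0(v))$ and hence $\Phi(\mathbf{b};y) = \Phi(\mathbf{b}';y')$, exactly the computation already underlying Lemma~\ref{lem:proper-ack}. The real content is the case $r = N1_j$, where coordinate $j$ drops by one and coordinate $j-1$ is reset to $y+1$ (the base being unchanged, $y' = y$). Here I would expand the single $F_j$ that is being unfolded using the defining identity $F_j(v) = F_{j-1}^{v+1}(v)$ of the fast-growing hierarchy, rewriting $\Phi(\mathbf{b};y)$ as a $\Phi$-term whose $(j-1)$-th coordinate records the genuine number of pending $F_{j-1}$-applications. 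Comparing this genuine count against the number $y+1$ installed by the rule, and then invoking monotonicity of $\Phi$ in both of its arguments (Lemma~\ref{lem:key-vectorial}) together with the monotonicity of the $F_i$ from Lemma~\ref{lem:props-fast-growing-functions}, reduces the one-step inequality to a comparison of two integers.

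I expect the main obstacle to be precisely this integer comparison in the $N1_j$ case. The value that is fed into the unfolded $F_j$ is produced by the lower coordinates of $\mathbf{b}$ acting on the base, and pinning it down relative to the installed count $y+1$ is exactly where the hypothesis $|\mathbf{b}|_\infty \le x$ must be used, since it bounds every coordinate of $\mathbf{b}$ — and in particular those below $j$ — by the final base and thereby forces the comparison to resolve in the stated direction. Once this single-step inequality is settled and the inheritance of the precondition verified, the induction closes by transitivity, with each $N2$ step contributing an equality and each $N1_j$ step contributing the one-step inequality; the delicate bookkeeping is to make sure the bound on the coordinates is never violated along the derivation, which is what ties the hypothesis to the conclusion.
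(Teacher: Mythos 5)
There is a genuine gap, and it sits exactly where you predicted the ``real content'' would be: the single-step inequality for $N1_j$ resolves in the \emph{opposite} direction to the one you assert. Unfold the source term as you propose: with $v = F_{j-2}^{b_{j-2}}(\dots F_0^{b_0}(y))$ and $u = F_{j-1}^{b_{j-1}}(v)$, the identity $F_j(u)=F_{j-1}^{u+1}(u)$ rewrites $\Phi(\mathbf{b};y)$ with a \emph{genuine} pending count of $u+1$ applications of $F_{j-1}$ to the argument $u$, whereas the rule installs the count $y+1$ applied to $v$. Since every $F_i(w)\ge w$, we have $u\ge v\ge y$, so the genuine count always dominates the installed one, and monotonicity (Lemma~\ref{lem:key-vectorial}) gives $\Phi(\mathbf{b};y)\ \ge\ \Phi(\mathbf{b}';y')$ --- an improper step can only \emph{lose} value, strictly in general, and no bound on the coordinates of $\mathbf{b}$ can flip this. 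Concretely, take $k=1$, $\mathbf{b}=(1,1)$, $y=1$, and apply $N1_1$ improperly: $(\mathbf{b},y)\leadsto_{N1_1}(\mathbf{a},x)$ with $\mathbf{a}=(0,2)$, $x=1$, so $|\mathbf{b}|_\infty\le x$ holds, yet $\Phi(\mathbf{b};y)=F_1(F_0(1))=F_1(2)=5$ while $\Phi(\mathbf{a};x)=F_0^2(1)=3$. So the inequality you set out to prove is false as literally printed; the statement in the paper is garbled (the surrounding prose announces that ``improper application of the rules will result in a smaller value,'' and the application in Theorem~\ref{thm:lower} uses the lemma to \emph{upper}-bound the energy \eve can pump by $\Ack(|M|)$), and the intended conclusion is the reverse, $\Phi(\mathbf{a};x)\le\Phi(\mathbf{b};y)$. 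You inherited the typo rather than the theorem, and an honest execution of your own plan would have run into the reversed comparison at the first improper step.

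Beyond the direction, your proof architecture also diverges from the paper's in a way that matters. Your peel-off-the-first-rule induction requires the control hypothesis to be inherited by the tail, and --- as you half-suspect --- it is not: $N1_j$ installs $y+1$ in coordinate $j-1$ while leaving the base at $y$, so the intermediate configuration can violate any bound of the form $|\cdot|_\infty\le(\text{base})$, and nothing guarantees enough subsequent $N2$ steps to repair it. The paper avoids per-step inheritance entirely: it inducts on the rule \emph{index} $i$ rather than on derivation length, and handles an improper application of $N1_i$ (some $c_\ell>0$ with $\ell<i$) by exhibiting an alternative rewrite sequence from the same configuration --- $N2$ steps flushing the low coordinates into the base, followed by the $N1$ applications --- whose value is controlled by Lemma~\ref{lem:proper-ack} and the induction hypothesis; the coordinate bound is used exactly once, to dominate the untouched low coordinates $c_{i-2},\dots,c_0$ by the $z+1,\dots,z+2$ produced along that alternative sequence when invoking Lemma~\ref{lem:key-vectorial}. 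A corrected version of your argument would need both the reversed one-step inequality (which, happily, needs no coordinate hypothesis at all, by the unfolding above) and a rethink of where the hypothesis $|\mathbf{b}|_\infty\le x$ actually earns its keep.
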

\begin{proof}
	If $(\mathbf{b},y) \to_N (\mathbf{a},x)$ then the result follows by
	applying Lemma~\ref{lem:proper-ack}. If this is not the case, the
	sequence of rules applied to $\Phi(\mathbf{b};y)$ to obtain
	$\Phi(\mathbf{a};x)$ includes at least one improperly applied rule.
	Since $N2$ cannot be applied improperly, we focus on $N1$.  We will show
	that applying an $N1$ rule improperly cannot increase the value. The
	desired result will then follow by induction.
	
	We will argue that for any $z \in \mathbb{N}$ and any vector $\mathbf{c}
	\in \mathbb{N}^{k+1}$ such that $|\mathbf{c}|_\infty \le z$, it holds
	that for all $2 \le i \le k + 1$, applying any rule from $\{N1_j \st 0 <
	j < i \}$ improperly to $\Phi(\mathbf{c};z)$, yields a value smaller
	than $\Phi(\mathbf{c};z)$.
	For the base case we consider $i = 2$. We need to show the property
	holds for $N1_1$. Assume that $(\mathbf{c},z) \leadsto_{N1_1}
	(c_k,\dots,c_1 - 1,z+1,z)$ and that $c_0 > 0$. By applying $c_0$ times
	the rule $N2$ to $\Phi(\mathbf{c};z)$ we obtain $\Phi(c_k,\dots,c_1,0;z
	+ c_0)$. We can now properly apply $N1_1$ to the latter and obtain
	$\Phi(c_k,\dots,c_1-1,z + c_0;z + c_0)$. It follows from
	Lemma~\ref{lem:key-vectorial} that the claim holds for $i = 2$.  To
	conclude, we show that if it holds for $i$ then it must hold for $i + 1$.
	We only need to show the property holds for $N1_{i}$. Assume that
	$(\mathbf{c},z) \leadsto_{N1_i} (c_k,\dots,c_i-1,z+1,c_{i-2},\dots,z)$
	and that $c_\ell > 0$ for some $0 \le \ell < i$. By applying the
	sequence of rules $N1_\ell N1_{\ell - 1} \dots N1_1 N2 N1_i N1_{i-1}
	\dots N1_1$ to $\Phi(\mathbf{c};z)$ we obtain
	$\Phi(c_k,\dots,c_i-1,z+1,z+1,\dots,z+2;z+1)$. It follows from induction
	hypothesis and the fact that $|\mathbf{b}|_\infty \le z$ that
	$\Phi(\mathbf{c};z)$ is larger than the latter, which is, in turn, larger
	than $\Phi(c_k,\dots,c_i-1,z+1,c_{i-2},\dots;z)$ according to
	Lemma~\ref{lem:key-vectorial}. Thus, the claim holds. 
\end{proof}

In the next section we will detail a new gadget which can be used to pump an energy
level of $m$ up to $F_m(m) = \Ack(m)$. The gadget simulates the rewrite rules to
compute $\Ack$ vectorially.

\subsection{$\Ack$-pumping gadget}
For convenience, we will focus on the blind gadget as a blind energy game
itself. We will later comment on how it fits together with the Minsky machine
game constructed in Section~\ref{sec:2cm-simulation}.

Let us consider a fixed $m \in \mathbb{N}$. The blind energy game $I_m$ we build
has exactly $m + 5$ states, namely: an initial state $q_0$ and states
$\{\top,f,\chi\} \cup \{\alpha_i \st 0 \le i \le m\}$. The game starts with a
non-deterministic choice of state from the set $\{\chi\} \cup \{\alpha_i \st 0
\le i \le m\}$. The weight of the transition going to state $\alpha_m$ is $1$;
to state $\chi$, $m$; to all other states, $0$. The alphabet consists of as many
rewrite rules (defined in the previous section) as required to compute $F_m$.
More formally, the alphabet is $\Sigma = \{N0,N2\} \cup \{N1_j \st 1 \le j \le
m\}$. For clarity, the game has been split into
Figures~\ref{fig:pumping-out}--\ref{fig:pumping-one}, each Figure showing
transitions for different letters. Intuitively, the game $I_m$ allows \eve to
simulate the vectorial computation of $\Ack(m)$. Once she plays the letter $N_0$
the play moves to either $\top$ or $f$, and---as we will argue later---if she
has correctly simulated the computation of the function and the play has reached
$f$, its energy level will be $\Ack(m)$.

\begin{figure}
\begin{center}
\begin{tikzpicture}[node distance=0.8cm]
	\node[state,initial above,
		initial text={\scriptsize{$1$}}](qm){$\alpha_{m}$};
	\node[right=0.3cm of qm](dots1){$\dots$};
	\node[state,initial above,
		initial text={\scriptsize{$0$}},right=0.3cm of
		dots1](qj){$\alpha_j$};
	\node[state,initial above,
		initial text={\scriptsize{$0$}},right=of qj](qj1){$\alpha_{j-1}$};
	\node[state,below=0.8cm of qj1](sink){$\top$};
	\node[right=0.3cm of qj1](dots2){$\dots$};
	\node[state,initial above,
		initial text={\scriptsize{$0$}},
		right=0.3cm of dots2](q0){$\alpha_0$};
	\node[state,initial above,
		initial text={\scriptsize{$m$}},right=of q0](chi){$\chi$};
	\node[state,below=0.8cm of chi](out){$f$};

	\path
	(sink) edge[loop below] node[el]{$\Sigma$,$0$} (sink)
	(out) edge[loop below] node[el]{$\Sigma$,$0$} (out)
	;

	\path
	(qm) edge[bend right] node[el,swap] {$N0$,$0$} (sink)
	(qj) edge node[el,swap] {$N0$,$0$} (sink)
	(chi) edge node[el] {$N0$,$0$} (out)
	(qj1) edge node[el]{$N0$,$0$} (sink)
	(q0) edge[bend left] node[el]{$N0$,$0$} (sink)
	;
\end{tikzpicture}
\end{center}
\caption{Pumping gadget with only transitions for $\sigma = N0$ shown for states
	$\{\chi\}\cup\{\alpha_i \st 0 \le i \le m\}$.}
\label{fig:pumping-out}
\end{figure}
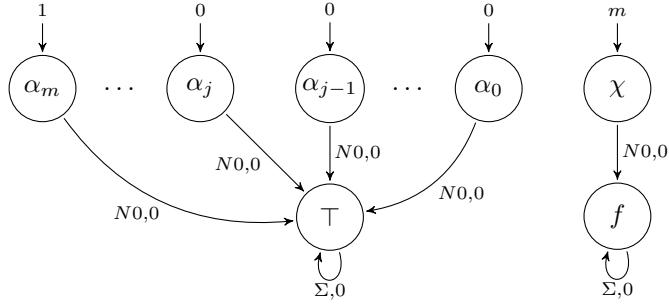

\begin{figure}
\begin{center}
\begin{tikzpicture}[node distance=0.8cm]
	\node[state,initial above,
		initial text={\scriptsize{$1$}}](qm){$\alpha_{m}$};
	\node[right=0.3cm of qm](dots1){$\dots$};
	\node[state,initial above,
		initial text={\scriptsize{$0$}},
		right=0.3cm of dots1](qj){$\alpha_j$};
	\node[state,initial above,
		initial text={\scriptsize{$0$}},right=of qj](qj1){$\alpha_{j-1}$};
	\node[state,below=0.8cm of qj1](sink){$\top$};
	\node[right=0.3cm of qj1](dots2){$\dots$};
	\node[state,initial above,
		initial text={\scriptsize{$0$}},
		right=0.3cm of dots2](q0){$\alpha_0$};
	\node[state,initial above,
		initial text={\scriptsize{$m$}},right=of q0](chi){$\chi$};
	\node[state,below=0.8cm of chi](out){$f$};

	\path
	(sink) edge[loop below] node[el]{$\Sigma$,$0$} (sink)
	(out) edge[loop below] node[el]{$\Sigma$,$0$} (out)
	;
	
	\path
	(qj) edge[loop below] node[el]{$N1_j$,$-1$} (qj)
	(qj1) edge node[el,swap]{$N1_j$,$0$} (sink)
	(chi) edge[out=-120,in=-60,looseness=1.5] node[el]{$N1_j$,$+1$} (qj1)
	(qm) edge[loop below] node[el]{$N1_j$,$0$} (qm)
	(q0) edge[loop below] node[el]{$N1_j$,$0$} (q0)
	(chi) edge[loop below] node[el]{$N1_j$,$0$} (chi)
	;
\end{tikzpicture}
\end{center}
\caption{Pumping gadget with only transitions for $\sigma = N1_j$ shown for states
	$\{\chi\}\cup\{\alpha_i \st 0 \le i \le m\}$.}
\label{fig:pumping-up}
\end{figure}

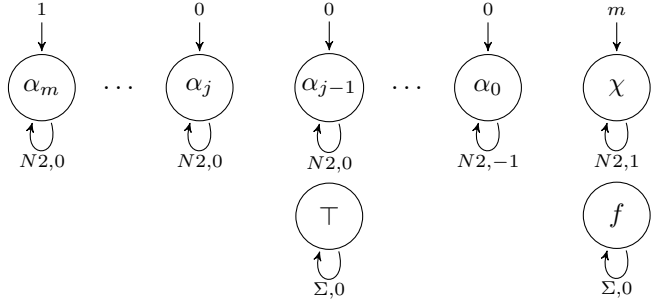
\begin{figure}
\begin{center}
\begin{tikzpicture}[node distance=0.8cm]
	\node[state,initial above,
		initial text={\scriptsize{$1$}}](qm){$\alpha_{m}$};
	\node[right=0.3cm of qm](dots1){$\dots$};
	\node[state,initial above,
		initial text={\scriptsize{$0$}},right=0.3cm of
		dots1](qj){$\alpha_j$};
	\node[state,initial above,
		initial text={\scriptsize{$0$}},right=of qj](qj1){$\alpha_{j-1}$};
	\node[state,below=0.8cm of qj1](sink){$\top$};
	\node[right=0.3cm of qj1](dots2){$\dots$};
	\node[state,initial above,
		initial text={\scriptsize{$0$}},right=0.3cm of
		dots2](q0){$\alpha_0$};
	\node[state,initial above,
		initial text={\scriptsize{$m$}},right=of q0](chi){$\chi$};
	\node[state,below=0.8cm of chi](out){$f$};

	\path
	(sink) edge[loop below] node[el]{$\Sigma$,$0$} (sink)
	(out) edge[loop below] node[el]{$\Sigma$,$0$} (out)
	;
	
	\path
	(qj) edge[loop below] node[el]{$N2$,$0$} (qj)
	(qj1) edge[loop below] node[el]{$N2$,$0$} (qj1)
	(qm) edge[loop below] node[el]{$N2$,$0$} (qm)
	(chi) edge[loop below]
		node[el]{$N2$,$1$} (chi)
	(q0) edge[loop below] node[el]{$N2$,$-1$} (q0)
	;
\end{tikzpicture}
\end{center}
\caption{Pumping gadget with only transitions for $\sigma = N2$ shown for states
	$\{\chi\}\cup\{\alpha_i \st 0 \le i \le m\}$.}
\label{fig:pumping-one}
\end{figure}

Let us write $\Sigma_N$ for the restricted alphabet $ \{N2\} \cup \{N1_j \st 1
\le j \le m\} \subseteq \Sigma$ and $Q_N$ for the set of states $\{\chi\} \cup
\{\alpha_i \st 0 \le i \le m\}$.  We now prove the property this game (or
gadget) enforces. The idea is that \eve playing a sequence of rewrite rules
$\Sigma_N$ has the effect that all plays consistent with her strategy and which
end in $\alpha_i$ have energy level equal to the value of $a_i$ after applying
the rules to $\Phi(\mathbf{a};m)$.

\begin{lemma}\label{lem:correct-plays}
	Consider any play prefix $\pi = q_0 \sigma_0 \dots \sigma_{n-1} q_n$ in
	$I_m$ such that $\sigma_{i} \in \Sigma_N$ for all $0 < i <n$, and
	$\alpha_i \in Q_N$ and $\el(\pi[0..i]) \ge 0$ for all $0 < i \le n$. If
	$q_n = \alpha_j$ then $\el(\pi) = a^{n}_j$, and if $q_n = \chi$ then
	$\el(\pi) = m^n$, where $\mathbf{a^0} = (1,\overbrace{0,\dots,0}^{m
	\text{ times}})$, $m^0 = m$, and $(a^0_m,\dots,a^0_0,m^0)
	\leadsto_{\sigma_0} \dots \leadsto_{\sigma_{n-1}}
	(a^n_m,\dots,a^n_0,m^n)$.
\end{lemma}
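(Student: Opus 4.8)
The plan is to prove the statement by induction on the length $n$ of the play prefix, establishing the correspondence that the energy level of $\pi$ equals the coordinate of the current vector associated with its final state: energy $a^{\cdot}_j$ when $\pi$ ends in $\alpha_j$, and energy $m^{\cdot}$ (the parameter $x$) when $\pi$ ends in $\chi$. Since the induction hypothesis will be applied to the one-step-shorter prefix $\pi[0..n-1]$, which ends in $q_{n-1}$, the only structural fact I need is that every state of $Q_N$ has, under each letter of $\Sigma_N$, a \emph{unique} incoming transition from within $Q_N$. This is read directly off Figures~\ref{fig:pumping-up} and~\ref{fig:pumping-one}: under $N2$ each $Q_N$-state is reached only by its own self-loop, and under $N1_j$ the state $\alpha_{j-1}$ is reached from $Q_N$ only via the edge $\chi \to \alpha_{j-1}$ (its own outgoing $N1_j$-transition leaves $Q_N$ toward $\top$), while every other $Q_N$-state is again reached only by its self-loop. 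Hence the value carried to $q_n$ is determined by $q_{n-1}$, the letter $\sigma_{n-1}$, and the weight of the connecting transition.

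For the base case I would read off the weights of the nondeterministic entry transitions of Figure~\ref{fig:pumping-out}: a play reaching $\alpha_m$ has energy $1$, a play reaching $\alpha_i$ with $i<m$ has energy $0$, and a play reaching $\chi$ has energy $m$, which match $\mathbf{a^0}=(1,0,\dots,0)$ and $m^0=m$. For the inductive step I would fix the last letter and use $\el(\pi)=\el(\pi[0..n-1])+w(q_{n-1},\sigma_{n-1},q_n)$ together with the induction hypothesis on $\pi[0..n-1]$. If $\sigma_{n-1}=N2$, then $\alpha_0$ self-loops with weight $-1$ (so its coordinate drops by one), $\chi$ self-loops with weight $+1$ (so $x$ increases by one), and every other state self-loops with weight $0$; this is exactly the effect of the rule $N2$. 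If $\sigma_{n-1}=N1_j$, then the self-loop of weight $-1$ at $\alpha_j$ realises the decrement of coordinate $j$, and all states other than $\alpha_j,\alpha_{j-1},\chi$ keep their energy, matching $N1_j$ on those coordinates.

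The step I expect to be the main obstacle is coordinate $j-1$ under $N1_j$, which the rewrite rule \emph{resets} to $x+1$ rather than incrementing. Here I would use the hypothesis $q_i\in Q_N$: a surviving $\pi$ ending in $\alpha_{j-1}$ cannot have occupied $\alpha_{j-1}$ one step earlier, since the only $N1_j$-transition out of $\alpha_{j-1}$ leads to the sink $\top\notin Q_N$; therefore $q_{n-1}=\chi$ and $\pi$ enters through the weight-$+1$ edge $\chi\to\alpha_{j-1}$. By the induction hypothesis $\el(\pi[0..n-1])=m^{n-1}$, so $\el(\pi)=m^{n-1}+1$, which is precisely the reset value $x+1$ prescribed by $N1_j$ (recall that $x$ is left unchanged by the rule, and indeed $\chi$ is kept alive with non-negative energy by its weight-$0$ self-loop throughout). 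The essential point is that discarding the old content of $\alpha_{j-1}$ into $\top$ is what converts an increment into an assignment. Finally, I would observe that the non-negativity hypothesis $\el(\pi[0..i])\ge 0$ keeps the coordinate on which $\pi$ decrements (namely $\alpha_j$ under $N1_j$, or $\alpha_0$ under $N2$) non-negative, so the values $a^n_j,m^n$ produced by the rewrite sequence remain in $\mathbb{N}$, as required for the sequence to be well-defined; this closes the induction.
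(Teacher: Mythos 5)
Your proof is correct and follows essentially the same route as the paper's: induction on the prefix length, a base case reading off the entry weights, and an inductive step doing case analysis on the last letter ($N2$ versus $N1_j$) in which the crucial point is that a prefix ending in $\alpha_{j-1}$ after $N1_j$ must have come from $\chi$ via the weight-$(+1)$ edge (since $\alpha_{j-1}$'s own $N1_j$-transition exits to $\top \notin Q_N$), yielding the reset value $x+1$. The only cosmetic difference is that you argue backwards from the final state to its unique $Q_N$-predecessor, whereas the paper cases forward on the penultimate state; your explicit remark that the non-negativity hypothesis keeps the rewrite sequence well-defined is a small bonus the paper leaves implicit.
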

\begin{proof}
	We proceed by induction on $n$. Note that if $n = 1$, i.e. the play
	prefix has only two states, the energy level will be equal to $1$ if the
	second state in the play is $\alpha_m$, $m$ if it is $\chi$, and $0$
	otherwise.  Hence the claim holds for prefixes of some length $n$. Let
	us argue that this also holds for prefixes of length $n+1$. Consider
	an arbitrary play prefix $ \pi = q_0 \sigma_0 \dots \sigma_n q_{n+1}$
	for which all assumptions hold. By induction hypothesis we have that
	$\el(\pi[0..n]) = a^n_j$ if $q_n = \alpha_j$ and $m^n$ otherwise. If
	$\sigma_n = N2$, following the transitions shown in
	Figure~\ref{fig:pumping-one} we get that: if $q_n = \alpha_j$, and $0 <
	j \le m$ then the claim holds since $q_{n+1} = \alpha_j$ and $\el(\pi) =
	a^{n+1}_j = a^{n}_j$; also, if $q_n = \alpha_0$ the claim holds since
	$q_{n+1} = \alpha_0$ and $\el(\pi) = a^{n+1}_0 = a^n_0 - 1$ as expected;
	finally, if $q_n = \chi$ then it holds since $q_{n+1} = \chi$ and
	$\el(\pi) = m^{n+1} = m^n + 1$. Otherwise, if $\sigma_n = N1_\ell$,
	following the transitions shown in Figure~\ref{fig:pumping-up} we get
	that: if $q_n = \alpha_j$, and $j \not\in \{\ell,\ell-1\}$ then the claim
	holds since $q_{n+1} = \alpha_j$ and $\el(\pi) = a^{n+1}_j = a^{n}_j$; if
	$q_n = \chi$ then it holds since either $q_{n+1} = \chi$ and $\el(\pi) =
	m^{n+1} = m^n$ or $q_{n+1} = \alpha_{\ell-1}$ and $\el(\pi) =
	a^{n+1}_{\ell-1} = m^n + 1$; it cannot be the case that $q_n =
	\alpha_{\ell-1}$ since otherwise $q_{n+1}$ must be $\bot$ and that would
	violate our assumptions; finally, if $q_n = \alpha_\ell$ then $q_{n+1} =
	\alpha_\ell$ and $\el(\pi) = a^{n+1}_\ell = a^n_\ell - 1$ as required. The
	result thus follows by induction.
\end{proof}

We are finally ready to prove our main result.
\begin{theorem}\label{thm:lower}
	The fixed initial credit problem is \ACK-hard, even for blind games.
\end{theorem}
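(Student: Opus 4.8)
The plan is to reduce the $\Ack$-bounded halting problem for $2$CMs---which is $\ACK$-complete by the discussion after its problem statement---to the fixed initial credit problem for \emph{blind} energy games. I would reuse the machine-simulation game $G_M$ of Section~\ref{sec:2cm-simulation} almost verbatim, but \emph{replace the hard-coded initial weights} $|M|$ (gadget~$4$) and $|M|^3$ (gadgets~$3$ and~$5$) by energy levels that \eve must herself generate: $\Ack(|M|)$ for the counter-bound gadget and a value exceeding $|M|(\Ack(|M|))^2$ for the length-control gadgets. The crucial point is that these enormous numbers are never written into the instance; they are \emph{produced at play time} by prepending the blind $\Ack$-pumping gadget $I_{|M|}$, whose exit state $f$ (reached on the letter $N0$) is wired into the incoming arrow of the relevant simulation gadget. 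Since $I_{|M|}$ has only $|M|+5$ states and an alphabet of size $|M|+2$, the whole construction is computable in time polynomial in $|M|$, so no value of $\Ack$ is ever evaluated during the reduction, and the map is (far more than) primitive recursive.

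First I would fix the combined blind game $G'_M$ as the union, under \adam's initial nondeterministic branch, of (i) verification copies of $I_{|M|}$ that merely police \eve's pumping phase, and (ii) copies of each simulation gadget whose entry energy is supplied by a pumping gadget of this kind. \eve's intended winning strategy is the blind word $w\cdot N0\cdot(\#\rho)^\omega$, where $\rho$ is a witnessing halting run and $w$ is the sequence of rewrite letters that \emph{properly} computes $\Ack(|M|)$ in the sense of Lemma~\ref{lem:correct-plays}. By Lemma~\ref{lem:short-witness} I may assume $|\rho|\le |M|(\Ack(|M|))^2$; then Lemma~\ref{lem:correct-plays} guarantees that the play reaching $f$ carries the intended pumped energy, and the subsequent simulation keeps every play of every gadget non-negative, exactly as in the ``if'' direction of Proposition~\ref{pro:cm-2-eg} but with the constant $|M|$ upgraded to $\Ack(|M|)$.

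For soundness I would argue the contrapositive. If $M$ has no $\Ack(|M|)$-bounded halting run, then from credit $c_0=0$ no observation-based strategy wins $G'_M$. The verification copies of $I_{|M|}$ force the pumping phase to be an admissible rewrite sequence: each $N2$ or $N1_j$ must leave every $\alpha$-play non-negative, which is precisely the requirement that the decremented vector coordinate was positive. The energy then delivered to the simulation is at most $\Ack(|M|)$: by Lemma~\ref{lem:proper-ack} proper applications preserve $\Phi$, while by Lemma~\ref{lem:improper-less-ack} improper ones cannot increase it, so along any admissible pumping $\Phi$ never exceeds its initial value $\Phi(1,0,\dots,0;|M|)=\Ack(|M|)$; since $\Phi(\mathbf a;x)\ge x$ always (a consequence of Lemma~\ref{lem:props-fast-growing-functions}), the exit energy $x$ at $f$ is bounded by $\Ack(|M|)$. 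Hence \eve simulates with counter bound at most $\Ack(|M|)$, and the reasoning of Proposition~\ref{pro:cm-2-eg} applies: deviating from a valid run is punished by gadgets~$1$, $2$, $5$; exceeding the counter bound by gadget~$4$; and failing to restart a halting simulation within the allotted length by gadget~$3$. So \eve loses. The two directions give the equivalence, and blindness is preserved throughout, establishing $\ACK$-hardness.

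The step I expect to be the main obstacle is exactly this soundness half---guaranteeing that \eve can extract \emph{no more} than $\Ack(|M|)$ units of energy, since over-pumping would let her simulate runs whose counters exceed $\Ack(|M|)$ and collapse the reduction to a weaker bounded-halting problem. Controlling it is where the vectorial rewrite machinery earns its keep, through the interplay of the non-negativity constraint (restricting \eve to admissible rewrite steps) with Lemmas~\ref{lem:key-vectorial}, \ref{lem:proper-ack} and~\ref{lem:improper-less-ack} (ensuring improper steps never pay off). A secondary, more bookkeeping-flavoured difficulty is gluing the two blind phases so that the rewrite letters are inert in the simulation gadgets and the simulation letters are inert in $I_{|M|}$, while supplying the strictly larger value that the length-control gadgets need---feasible without blowing up the instance precisely because $\Ack$ dwarfs $|M|(\Ack(|M|))^2$.
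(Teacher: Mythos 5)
Your overall route is the paper's route: reduce from the $\Ack$-bounded halting problem, let \eve generate the large constants at play time with the blind pumping gadget $I_{|M|}$, enforce admissibility of rewrite steps through non-negativity (Lemma~\ref{lem:correct-plays}), and cap what she can extract via Lemma~\ref{lem:improper-less-ack} so that gadget~$4$ never receives more than $\Ack(|M|)$. That soundness half, which you flag as the main obstacle, you handle correctly and exactly as the paper does.

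The genuine gap is on the completeness side, in the one step you dismiss as bookkeeping. Gadgets~$3$ and~$5$ need entry energy at least $|M|(\Ack(|M|))^2$, since by Lemma~\ref{lem:short-witness} the shortest witness run can be that long, and your honest strategy $w \cdot N0 \cdot (\#\rho)^\omega$ must survive those gadgets for the whole of $\rho$. But a copy of $I_{|M|}$ can deliver at most $\Ack(|M|)$ --- that is precisely what Lemma~\ref{lem:improper-less-ack} guarantees, and you cannot give it up without breaking your own soundness argument --- and $\Ack(|M|) < |M|(\Ack(|M|))^2$. Your justification that this is ``feasible \dots because $\Ack$ dwarfs $|M|(\Ack(|M|))^2$'' has the inequality backwards: $\Ack(|M|)$ does \emph{not} dominate $|M|(\Ack(|M|))^2$, so with a single pumping phase feeding all gadgets, \eve's faithful simulation is killed by gadgets~$3$ and~$5$ whenever $|\rho| > \Ack(|M|)$, and the reduction is not complete. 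The paper closes this hole with an idea absent from your proposal: a \emph{second} copy of $I_m$ chained after the first, which receives $m' = \Ack(|M|)$ as its incoming energy and pumps it further to $m'' = F^{m'+1}_{m'}(F^{m'}_{m'-1}(\dots F^{m'}_0(m'+m))) \ge |M|(\Ack(|M|))^2$, together with the delay states $s_0,s_1$ so that, in the blind game where one word drives all plays simultaneously, the play carrying $m'$ idles while the second copy computes $m''$, and the two plays enter gadget~$4$ and gadgets~$1$--$3$,$5$ with the right energies at the right rounds. This staged, two-copy construction (Figure~\ref{fig:full-construction}) is the missing ingredient; without it, or some equivalent mechanism producing a value beyond $|M|(\Ack(|M|))^2$ while still capping gadget~$4$ at $\Ack(|M|)$, the equivalence you claim does not hold.
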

\begin{proof}
	For any $2$CM $M$ we will consider two instances of the new pumping
	gadget $I_m$, with $m = |M|$, and one instance of the $2$CM-simulating
	game $G_M$ from Section~\ref{sec:2cm-simulation}. (The first copy of
	$I_m$ will be used to compute the Ackermann function while the second
	one will be used to obtain a value greater than $m(\Ack(m))^2$.)
	Additionally, we will add two new states, $s_0$ and $s_1$, which have a
	$0$-weighted self-loop on all letters from the alphabet of $I_m$, except
	for $N0$, and a bad sink state, $\bot$, which has self-loops with weight
	$-1$ on all letters from the alphabets of $I_m$ and $G_M$. The good sink
	$\top$, in copies of $I_m$ now have self-loops with weight $0$ on all
	letters from the alphabet of $G_M$ and $I_m$.
	
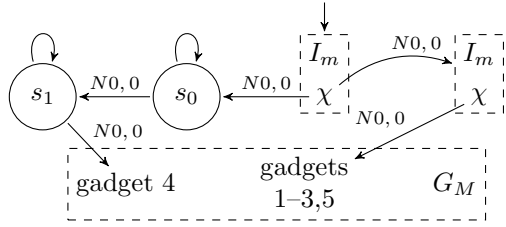
\begin{figure}
\begin{center}
\begin{tikzpicture}
	\node(Im1){$I_m$};
	\node[below=0.2cm of Im1](chi1){$\chi$};
	\node[fit=(Im1) (chi1),initial above,draw,dashed](fit1){ };

	\node[right=1.5cm of Im1](Im2){$I_m$};
	\node[below=0.2cm of Im2](chi2){$\chi$};
	\node[fit=(Im2) (chi2),draw,dashed](fit2){ };

	\node[state,left=1.2cm of chi1](s0){$s_0$};
	\node[state,left=1cm of s0](s1){$s_1$};

	\node[below=0.5cm of s0,xshift=-0.8cm](g4){gadget $4$};
	\node[right=1cm of g4,align=center](gs){gadgets\\$1$--$3$,$5$};
	\node[right=1cm of gs](GM){$G_M$};
	\node[fit=(g4) (gs) (GM),draw,dashed](fit3){ };

	\path
	(chi1) edge[bend left] node[el,pos=0.7]{$N0,0$} (fit2)
	(chi1) edge node[el,swap]{$N0,0$} (s0)
	(s0) edge[loop above] (s0)
	(s0) edge node[el,swap]{$N0,0$} (s1)
	(s1) edge[loop above] (s1)
	(s1) edge node[el]{$N0,0$} (g4)
	(chi2) edge node[el,swap]{$N0,0$} (gs)
	;

\end{tikzpicture}
\end{center}
\caption{Overview of the blind game used to show \ACK-hardness.}
\label{fig:full-construction}
\end{figure}

	We describe how all five components are connected (see
	Figure~\ref{fig:full-construction}). From $\chi$ in the first copy of
	$I_m$ with $N0$ and weight $0$ we non-deterministically go to $s_0$ and
	the initial state of the second $I_m$; from $s_0$ with $N0$ and weight
	$0$ we deterministically go to $s_1$; from $s_1$ with $N0$ and weight
	$0$ we deterministically go to the $G_M$ gadget from
	Figure~\ref{fig:gadget4}; from $\chi$ in the second copy of $I_m$ with
	$N0$ and weight $0$ we non-deterministically go to all copies of the
	gadgets from Figures~\ref{fig:gadget1}, \ref{fig:gadget2},
	\ref{fig:gadget3}, and~\ref{fig:gadget5}. Finally, to make sure the
	transition relation is total, from both copies of $I_m$, $s_0$, and
	$s_1$ we add transitions to $\bot$ on all letters from the alphabet of
	$G_M$. Also, from $G_M$ we add transitions to $\bot$ on all letters from
	the alphabet of $I_m$.

	We will now argue that \eve has an observation-based winning strategy
	for initial credit $c_0 = 0$ if and only if $M$ has an
	$\Ack(|M|)$-bounded halting run.

	\item \paragraph*{If it halts, she wins}
	If $M$ has an $\Ack(|M|)$-bounded halting run, then \eve should play the
	sequence of proper rewrite rules required to compute $m' = \Ack(|M|)$
	vectorially from $\Phi(1,0,\dots,0;m)$.
	She will then play $N0$ and choose letters to compute
	\(
		F^{m'+1}_{m'}(F^{m'}_{m'-1}(\dots F^{m'}_0(m'+m)))
	\)
	which we denote by $m''$.
	Note that $m'' \ge |M|(\Ack(|M|))^2$. Finally, she will simulate the
	halting run of $M$. From Lemma~\ref{lem:correct-plays} we have that:
	after the first time she plays $N0$ the play is either in the first
	copy of $\top$ (and will never have negative energy level) or
	it has energy level $\Ack(|M|)$ and is now in the initial state of the
	second copy of $I_m$ or $s_0$. By playing $N0$ the play now moves to
	$s_1$ or the simulation of the computation of $m''$. After playing a
	third $N0$, the play moves from the second copy of $\chi$---with energy
	level $m''$---and from $s_1$---with $m'$--- to the corresponding
	gadgets in $G_M$ or it moves to $\top$, where \eve cannot lose.
	Hence, any play consistent with this strategy of \eve, and which does not
	enter $G_M$, cannot have negative energy level. If the play has entered
	$G_M$ then the same arguments as presented to prove
	Proposition~\ref{pro:cm-2-eg} should convince the reader that it cannot
	have negative energy level.

	\item \paragraph*{If it does not halt, she does not win} If $M$ has no
	$\Ack(|M|)$-bounded halting run then, from
	Lemmas~\ref{lem:termination} and~\ref{lem:correct-plays} we have
	that \eve eventually play three times $N0$ to exit the copies of $I_m$
	and enter $G_M$---or end in a $\top$ state---lest we can construct a
	play with negative energy level. Also, using
	Lemma~\ref{lem:improper-less-ack}, we conclude that she cannot exit the
	two copies of $I_m$ and enter $G_M$ with energy level greater than
	$\Ack(|M|)$ for the gadget from Figure~\ref{fig:gadget4} or energy level
	greater than $m''$ for the other gadgets, respectively. It thus follows
	from the proof of Proposition~\ref{pro:cm-2-eg} that she has no
	observation-based winning strategy.
\end{proof}

\bibliographystyle{alpha}
\bibliography{mpay-ii}

\end{document}